\def\fakeparagraph#1{\par\medskip\noindent\textbf{#1}}
\newcommand{\T}		{{\mathcal{T}}}
\newcommand{\C}		{{\mathcal{C}}}
\newcommand{\X}		{{\mathbb{X}}}
\newcommand{\reals}	{{\mathbb{R}}}
\newcommand{\ms}	{{\mathcal{X}}}
\newcommand{\mt}	{{\mathsf{M}}}
\newcommand{\eps}	{{\varepsilon}}
\newcommand{\shift}	{{\sigma}}
\newcommand{\distortion}	{\mathrm{Dist}}
\newcommand{\etal} {\emph{et al.}}
\definecolor{darkred}{rgb}{0.8, 0.0, 0.0}
\newtheorem{theorem}{Theorem}[section]
\newtheorem{corollary}[theorem]{Corollary}
\newtheorem{lemma}[theorem]{Lemma}
\newtheorem{claim}[theorem]{Claim}
\theoremstyle{definition}
\long\def\@makecaption#1#2{
   \vskip 10pt
   \setbox\@tempboxa\hbox{{\footnotesize \textbf{#1.} #2}}
   \ifdim \wd\@tempboxa >\hsize         
       {\footnotesize \textbf{#1.} #2\par}
     \else                              
       \hbox to\hsize{\hfil\box\@tempboxa\hfil}
   \fi}
\begin{document}

\title{Computing the Gromov-Hausdorff Distance for Metric Trees\thanks{Work on this paper by P. K.
  Agarwal, K. Fox and A. Nath was supported by NSF under grants CCF-09-40671, CCF-10-12254,
  CCF-11-61359, and IIS-14-08846, and by Grant 2012/229 from the U.S.-Israel
Binational Science Foundation. A. Sidiropoulos was supported by NSF under grants CAREER-1453472 and CCF-1423230. Y. Wang was supported by NSF under grant CCF--1319406.}}

\author{
Pankaj K. Agarwal\\ Duke University\\\texttt{pankaj@cs.duke.edu} \and
Kyle Fox\\ Duke University\\\texttt{kylefox@cs.duke.edu} \and
Abhinandan Nath\\ Duke University\\\texttt{abhinath@cs.duke.edu} \and
Anastasios Sidiropoulos\\ Ohio State University\\\texttt{sidiropoulos.1@osu.edu} \and
Yusu Wang\\ Ohio State University\\\texttt{yusu@cse.ohio-state.edu}
}

\date{}
%
%
%
%


\maketitle

\begin{abstract}
The Gromov-Hausdorff (GH) distance is a natural way to measure distance between two metric spaces. 
We prove that it is $\mathrm{NP}$-hard to approximate the Gromov-Hausdorff distance
better than a factor of~$3$ for geodesic metrics on a pair of trees.
We complement this result by providing a polynomial time $O(\min\{n, \sqrt{rn}\})$-approximation algorithm for computing the GH distance between a pair of metric trees, where~$r$ is
the ratio of the longest edge length in both trees to the shortest edge length. For metric trees with unit length edges, this yields an $O(\sqrt{n})$-approximation algorithm. 
\end{abstract}
\newpage
\section{Introduction}


The Gromov-Hausdorff distance (or GH distance for brevity)~\cite{GH-book} is one of the most natural distance measures between metric spaces, and has been used, for example, for matching deformable shapes~\cite{ms-tcfii-05,bbk-eciid-06}, and for analyzing hierarchical clustering trees \cite{clust-um}.
Informally, the Gromov-Hausdorff distance measures the \emph{additive} distortion suffered when
mapping one metric space to another using a correspondence between their points.
Multiple approaches have been proposed to estimate the Gromov-Hausdorff distance~\cite{ms-tcfii-05,bbk-eciid-06,f-oghds-07}.

Despite much effort, the problem of computing, either exactly or approximately, GH distance has remained elusive. The problem is not known to be $\mathrm{NP}$-hard, and computing the GH distance, even approximately, for \emph{graphic metrics}\footnote{A graphic metric measures the shortest path distance between vertices of a graph with unit length edges.} is at least as hard as the graph isomorphism problem. Indeed, the metrics for two graphs have GH distance 0 if and only if the two graphs are isomorphic. Motivated by this trivial hardness result, it is natural to ask whether GH distance becomes easier in more restrictive settings such as geodesic metrics over trees, where efficient algorithms are known for checking isomorphism~\cite{algo}.

\paragraph{Related work.}
Most work on associating points between two metric spaces involves \emph{embedding} a given high
dimensional metric space into an infinite host space of lower dimensional metric spaces.
However, there is some work on finding a bijection between points in two given finite metric spaces that minimizes typically multiplicative distortion of distances between points and their images, with some limited results on additive distortion.

Kenyon \etal~\cite{krs-ldmbp-09} give an optimal algorithm for minimizing the multiplicative distortion of a bijection between two equal-sized finite metric spaces, and a parameterized polynomial time algorithm that finds the optimal bijection between an arbitrary unweighted graph metric and a bounded-degree tree metric.

Papadimitriou and Safra~\cite{ps-clebp-05} show that it is NP-hard to approximate the multiplicative distortion of any bijection between two finite 3-dimensional point sets to within any additive constant or to a factor better than 3.

Hall and Papadimitriou~\cite{hp-ad-05} discuss the \emph{additive distortion problem} -- given two equal-sized point sets $S,T \subset \reals^d$, find the smallest $\Delta$ such that there exists a bijection $f : S \to T$ such that $d(x,y) - \Delta \le d(f(x),f(y)) \le d(x,y) + \Delta$. They show that it is NP-hard to approximate by a factor better than 3 in $\reals^3$, and also give a 2-approximation for $\reals^1$ and a 5-approximation for the more general problem of embedding an arbitrary metric space onto $\reals^1$. However, there setting differs from ours in two major ways -- firstly, they consider finite metric spaces of equal size, whereas in this paper the metric spaces may be uncountably infinite; secondly, they consider bijections between metric spaces, whereas in our work we deal with correspondences between metric spaces which are more general than bijections. Thus, their approach cannot be easily extended to our setting.

%

The interleaving distance between merge trees~\cite{morozov} was proposed as a measure to
compare functions over topological domains that is stable to small perturbations in a function. 
Distances for the more general Reeb graphs are given in \cite{bauer, DMP14}. These concepts are related to the GH distance (Section \ref{sec:GH-interleaving}), which we will leverage to design an approximation algorithm for the GH distance for metric trees.

\paragraph{Our results.}
In this paper, we give the first non-trivial results on approximating the GH distance between metric trees.
First, we prove (in Section~\ref{section:hardness}) that the problem remains $\mathrm{NP}$-hard even for metric trees via a reduction from \textsc{3-Partition}. In fact, we show that there exists no algorithm with approximation ratio less than~$3$ unless $\mathrm{P}=\mathrm{NP}$. As noted above, we are not aware of any result that shows the GH distance problem being $\mathrm{NP}$-hard even for general graphic metrics.

To complement our hardness result, we give an $O(\sqrt{n})$-approximation algorithm for
the GH distance between metric trees with $n$ nodes and \emph{unit length} edges.
Our algorithm works with arbitrary edge lengths as well; however, the approximation ratio
becomes~$O(\min\{n, \sqrt{rn}\})$ where~$r$ is the ratio of the longest edge length in both trees
to the shortest edge length.
Even achieving the $O(n)$-approximation ratio presented here for arbitrary~$r$ is a non-trivial
task.

Our algorithm uses a reduction, described in Section~\ref{sec:GH-interleaving}, to the similar problem of computing the \emph{interleaving distance}~\cite{morozov} between two \emph{merge trees}.
Given a function~$f : \X \rightarrow \mathbb{R}$ over a topological space~$\X$,
the merge tree~$T_f$ describes the connectivity between components of the sublevel sets of $f$ (see Section~\ref{sec:prelims} for a more formal definition).
Morozov et al.~\cite{morozov} proposed the interleaving distance as a way to
compare merge trees and their associated functions\footnote{In fact, our hardness result can be easily extended to the GH distance between graphic metrics for trees and the interleaving distance between merge trees.}.
We describe, in Section~\ref{sec:compute_id}, an $O(\min\{n,\sqrt{rn}\})$-approximation algorithm for interleaving distance between merge trees, and our reduction provides a similar approximation for computing the GH distance between two metric trees.

\section{Preliminaries}
\label{sec:prelims}
\paragraph{Metric Spaces and the Gromov-Hausdorff Distance.} A \emph{metric space} $\ms = (X, \rho)$ consists of a (potentially infinite) set $X$ and a function $\rho : X \times X \rightarrow \mathbb{R}_{\geq 0}$ such that the following hold: $\rho(x,y) = 0$ iff $x = y$; $\rho(x,y) = \rho(y,x)$; and $\rho(x,z) \leq \rho(x,y) + \rho(y,z)$.

Given sets $A$ and $B$, a \emph{correspondence} between $A$ and $B$ is a set $\C \subseteq A \times B$ such that: (i) for all $a \in A$, there exists $b \in B$ such that $(a,b) \in \C$; and (ii) for all $b \in B$, there exists $a \in A$ such that $(a,b) \in \C$.
We use $\Pi(A,B)$ to denote the set of all correspondences between $A$ and $B$.

Let $\ms_1 = (X_1, \rho_1)$ and $\ms_2 = (X_2, \rho_2)$ be two metric spaces. The \emph{distortion} of a correspondence $\C \in \Pi(X_1,X_2)$ is defined as:
\begin{align*}
\distortion(\C) = \sup_{(x,y), (x', y') \in \C} | \rho_1(x, x') - \rho_2(y, y') | .
\end{align*}

The \emph{Gromov-Hausdorff distance}~\cite{f-oghds-07}, $d_{GH}$, between $\mathcal{X}_1$ and $\mathcal{X}_2$ is defined as:
\begin{align*}
  d_{GH}(\mathcal{X}_1, \mathcal{X}_2) = \frac{1}{2} \inf_{\C \in \Pi(X_1,X_2)} \distortion(\C).
\end{align*}

Intuitively, $d_{GH}$ measures how close can we get to an \emph{isometric} (distance-preserving) embeddding between two metric spaces. We note that there are different equivalent definitions of the Gromov-Hausdorff distance; see e.g, Theorem 7.3.25 of \cite{MG-book} and Remark 1 of \cite{f-oghds-07}.

Given a tree $T = (V,E)$ and a length function $l : E \to \mathbb{R}_{\geq0}$, we associate a metric space $\T = (|T|, d)$ with $T$ as follows. $|T|$ is a geometric realization of $T$. The metric space is extended to points in an edge such that each edge of length $l$ is isometric to the interval $[0,l]$. For $x,y \in |T|$, define $d(x,y)$ to be the length of the path $\pi(x,y) \in |T|$ which is simply the sum of the lengths of the restrictions of this path to edges in $T$. It is clear that $d$ is a metric. The metric space thus obtained is a \emph{metric tree}. We often do not distinguish between $T$ and $|T|$ and write $\T = (T,d)$.

\paragraph{Merge Trees and the Interleaving Distance.}
Let $f : \X \rightarrow \mathbb{R}$ be a continuous function from a connected topological space $\X$ to the set of real numbers.
The \emph{sublevel set} at a value $a \in \mathbb{R}$ is defined as $f_{\leq a} = \{x \in \X \mid f(x) \leq a\}$. A \emph{merge tree} $\mt_f$ captures the evolution of the topology of the sublevel sets as the function value is increased continuously from $- \infty$ to $+ \infty$. 
Formally, it is obtained as follows.
Let $\text{epi } f = \{(x,y) \in \X \times \mathbb{R} \mid y \ge f(x)\}$.
Let $\bar{f} : \text{epi } f \rightarrow \mathbb{R}$ be such that $\bar{f}((x,y)) = y$.
We may say $\bar{f}((x,y))$ is the \emph{height} of point $(x,y) \in \X \times \mathbb{R}$.
For two points $(x,y)$ and $(x',y')$ in $\X\times \mathbb{R}$ with $y = y'$, let $(x,y) \sim (x',y')$ denote them lying in the same component of $\bar{f}^{-1} (y) (= \bar{f}^{-1}(y'))$.
Then $\sim$ is an equivalence relation, and the merge tree $\mt_f$ is defined as the quotient space $(\X \times \mathbb{R}) / \sim$.

Since two components of $\bar{f}^{-1}$ at a certain height can only merge at a higher height and a component can never split as height increases, we get a rooted tree where the internal nodes represent the points where two components merge and the leaves represent the birth of a new component at a local minimum. Figure \ref{merge_tree} shows an example of a merge tree for a 1-dimensional function. Note that the merge tree extends to a height of $\infty$, and our assumption that $\X$ is connected implies we have only one component in $F_{\leq \infty}$.
We define the \emph{root} of merge tree $\mt_f$ to be the internal node with the highest function value (if there are no internal nodes, the only  leaf is defined to be the root).

\begin{figure}[htb]
\centering
\includegraphics[height=0.18\textheight]{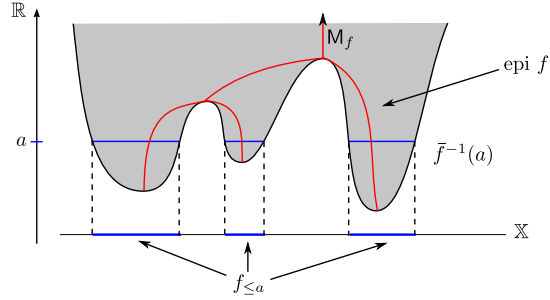}
\caption{\label{merge_tree}Merge tree $\mt_f$ (shown in red) for a function $f : \X \rightarrow \mathbb{R}$, where $\X = \mathbb{R}$. $\text{epi } f$ is shown in grey, and $\bar{f}^{-1}(a)$ is the grey region below the blue horizontal lines in $\text{epi } f$.}
\end{figure}

Since each point $x \in \mt_f$ represents a component of a sublevel set at a certain height, we can
associate this height value with $x$, denoted by $\hat{f}(x)$. Given a merge tree $\mt_f$ and $\eps \geq 0$, an
$\eps$-shift map $\shift_f^\eps : \mt_f \rightarrow \mt_f$ is the map that maps a point $x \in \mt_f$ to its ancestor at height $\hat{f}(x)+\eps$, i.e., $\hat{f}(\shift_f^\eps(x)) =
\hat{f}(x) + \eps$.
Given $\eps \geq 0$ and merge trees $\mt_f$ and $\mt_g$, two continuous maps $\alpha : \mt_f \rightarrow \mt_g$ and $\beta : \mt_g \rightarrow \mt_f$ are said to be $\eps$-compatible if they satisfy the following conditions~:
\begin{equation}
\label{eq:inter_cond}
\begin{aligned}
  \hat{g}(\alpha(x)) &= \hat{f}(x) + \eps, \forall x \in \mt_f;
  &\hat{f}(\beta(y)) &= \hat{g}(y) + \eps , \forall y \in \mt_g;\\
  \beta \circ \alpha &= \shift_f^{2\eps};
  &\alpha \circ \beta &= \shift_g^{2\eps}.
\end{aligned}
\end{equation}

\begin{figure}[htb]
\centering
\includegraphics[width=0.35\textwidth]{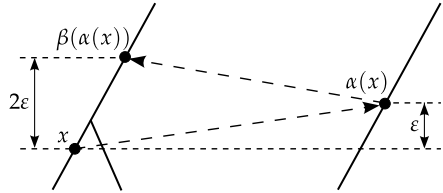}
\caption{\label{fig:interleaving}Part of trees $\mt_f$ and $\mt_g$ showing $\alpha$ and $\beta$.}
\end{figure}
See Figure~\ref{fig:interleaving} for an example.
The \emph{interleaving distance} \cite{morozov} is then defined as
\begin{align*}
d_I(\mt_f, \mt_g) = \inf \{\eps \ge 0 \mid \text{there exist $\eps$-compatible maps $\alpha$ and $\beta$}\}.
\end{align*}


\fakeparagraph{Remark.} We can relax the requirements on~$\alpha$ and~$\beta$ from their normal definitions as follows.

(i) Instead of requiring \emph{exact} value changes, we require 
\begin{align*}
\label{eq:def_mod}
\hat{f}(x) \leq \hat{g}(\alpha(x)) \leq \hat{f}(x) + \eps, \forall x \in \mt_f;\enspace\enspace
\hat{g}(y) \leq \hat{f}(\beta(y)) \leq \hat{g}(y) + \eps, \forall y \in \mt_g.
\end{align*}

(ii) If $x_1$ is an ancestor of $x_2$ in $\mt_f$, then $\alpha(x_1)$ is an ancestor of $\alpha(x_2)$ in $\mt_g$. A similar rule applies for~$\beta$.

(iii) $\beta(\alpha(x))$ must go to an ancestor of~$x$ and $\alpha(\beta(y))$ must go to an ancestor of~$y$.

Any pair of maps satisfying the original requirements also satisfies the relaxed requirements for the same value of $\eps$. Conversely, for any pair of maps satisfying the relaxed requirements, we can \emph{stretch up} the images for each map as necessary so that the new maps satisfy the original requirements, without changing the value of $\eps$. Thus, both definitions of interleaving distance are equivalent. For convenience, when two $\eps$-compatible maps are given to us we assume that they satisfy \eqref{eq:inter_cond}, but we construct $\eps$-compatible maps that satisfy the relaxed conditions mentioned, knowing that they can be ``stretched'' as just described to satisfy \eqref{eq:inter_cond}.

\smallskip
If we know $\alpha(x)$ for a point $x$ at height $h$, then we can compute $\alpha(y)$ for any ancestor $y$ of $x$ at height $h' \geq h$ by simply putting $\alpha(y) = \shift_f^{h'-h}( \alpha(x))$. A similar claim holds for $\beta$. Thus specifying the maps for the leaves of the trees suffices, because any point in the tree is the ancestor of at least one of the leaves. Hence, these maps have a representation that requires linear space in the size of the trees.

As shown in \cite{morozov}, the interleaving distance is a metric and has the desirable properties of being both stable to small function perturbations and more discriminative than the popular bottleneck distance between persistence diagrams \cite{ceh-spd-07}.


\section{Hardness of Approximation}
\label{section:hardness}
We now show the hardness of approximating the GH distance by a reduction from the following decision problem called \emph{balanced partition} (or \textsc{BAL-PART} for brevity): given a multiset of positive integers $X = \{a_1, \ldots, a_n\}$, and an integer $m$ such that $1 \le m \le n$, is it possible to partition $X$ into $m$ multisets $\{X_1, \ldots, X_m\}$ such that all the elements in each multiset sum to the same quantity $\mu=\left(\sum_{i=1}^n a_i\right)/m$? We prove below that \textsc{BAL-PART} is \emph{strongly} NP-complete, i.e., it remains NP-complete even if $a_i \le n^c$ for some constant $c \ge 1$ for all $1 \le i \le n$.
\begin{lemma}
\label{lem:np}
\textsc{BAL-PART} is strongly NP-complete.
\end{lemma}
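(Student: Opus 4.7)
The natural approach is a reduction from \textsc{3-Partition}, which is well-known to be strongly NP-complete. Recall that in \textsc{3-Partition} we are given a multiset of $3m$ positive integers $\{a_1,\ldots,a_{3m}\}$ with total sum $m\mu$ and with every $a_i$ strictly between $\mu/4$ and $\mu/2$, and we must decide whether the multiset can be partitioned into $m$ groups each summing to $\mu$. The bound $a_i < \mu/2$ forces each such group to have at least $3$ elements, while $a_i > \mu/4$ forces it to have at most $3$, so every valid partition uses groups of size exactly $3$.

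My plan is to give an essentially identity reduction. Given a \textsc{3-Partition} instance $(a_1,\ldots,a_{3m};\mu)$, produce the \textsc{BAL-PART} instance consisting of the same multiset $X=\{a_1,\ldots,a_{3m}\}$ together with the integer $m$. First I would verify membership in NP, which is immediate: an explicit partition $(X_1,\ldots,X_m)$ serves as a polynomial-size certificate whose correctness can be checked by summing each $X_i$ and comparing with $\mu=(\sum_i a_i)/m$. Then I would show correctness of the reduction in both directions: any \textsc{3-Partition} solution is trivially a \textsc{BAL-PART} solution with the same value of $m$, and conversely any partition of $X$ into $m$ equal-sum multisets must, because of the range restriction on the $a_i$'s inherited from the \textsc{3-Partition} instance, have exactly three elements in each multiset and hence is a valid \textsc{3-Partition} solution.

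Finally I would verify the strong NP-hardness claim. Since \textsc{3-Partition} remains NP-hard when the input integers are bounded by a polynomial in the number of items, say $a_i \le (3m)^{c'}$ for some constant $c'$, the BAL-PART instance we produce has $n=3m$ elements, each bounded by $n^{c'}\le n^c$ for a suitable constant $c$. The reduction is clearly polynomial-time (it merely copies the input and appends $m$), so strong NP-hardness transfers from \textsc{3-Partition} to \textsc{BAL-PART}.

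There is really no substantial obstacle: the only point requiring a line of justification is the backward direction of correctness, where one must invoke the fact that the range restriction on the $a_i$'s, though not part of the \textsc{BAL-PART} definition, is present in the instances coming from \textsc{3-Partition} and is what forces the group-size-three structure needed for the equivalence.
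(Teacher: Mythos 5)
Your proof is correct, but it takes a genuinely different route from the paper's. You invoke the standard Garey--Johnson formulation of \textsc{3-Partition}, in which every $a_i$ is constrained to lie strictly between $\mu/4$ and $\mu/2$; under that range restriction the requirement that every group have exactly three elements is implied by the equal-sum condition, so the identity map is already a valid reduction to \textsc{BAL-Part}, and strong NP-hardness transfers immediately since the numbers are unchanged. The paper instead works from a formulation of \textsc{3-Partition} that states the size-3 constraint explicitly but does \emph{not} assume the range restriction; to make the size constraint emerge from the equal-sum condition alone, it pads each element by $\bar a=\sum_i a_i$, which forces any two equal-sum parts of the padded multiset to have the same cardinality, and then argues that the cardinality must be $3$ since $n=3m$. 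Your approach is shorter and cleaner if one is willing to cite the range-restricted formulation as the canonical strongly NP-complete problem (which is standard and legitimate); the paper's padding argument is more self-contained and does not depend on which precise variant of \textsc{3-Partition} one starts from. Both proofs are valid, and both correctly observe that the produced numbers remain polynomially bounded so that strong NP-hardness is preserved.
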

\begin{proof}
We reduce \textsc{3-PARTITION}, a strongly NP-complete problem \cite{gary} to \textsc{BAL-PART}. Given a multiset of positive integers $Y = \{a_1, \ldots, a_n\}$ with $n=3m$, \textsc{3-PARTITION} asks to partition $Y$ into $m$ multisets $\{Y_1, \ldots, Y_m\}$ of size $3$ each so that the elements in each multiset sum to the same quantity. Given a \textsc{3-PARTITION} instance, we construct an instance of \textsc{BAL-PART} as follows.

Basically, we add a sufficiently large number to each $a_i$ so that if two multisets of the new numbers have the same sum, they have the same number of elements. In particular, let $\bar{a}= \sum_{i=1}^n a_i$. Then set $a_i' = a_i + \bar{a}$ and $X = \{a_1', \ldots, a_n'\}$. This reduction takes polynomial time, and the new numbers are polynomially larger than the original ones. We show that there exists an appropriate partition of $Y$ iff there exists an appropriate partition of $X$.

Suppose there exists an appropriate partition $\{Y_1, \ldots, Y_m\}$ of $Y$. Then setting $X_i = \{a_j' \mid a_j \in Y_i\}$ for $i = 1, \ldots, m$ gives us the desired partition of $X$.

Conversely, suppose there exists an appropriate partition $\{X_1, \ldots, X_m\}$ of $X$. Suppose $|X_i| = n_1 > |X_j| = n_2$ for some $i\neq j$. We thus have
\begin{equation}
\sum_{a_k'\in X_i} a_k + n_1 \bar{a} = \sum_{a_k'\in X_j} a_k + n_2 \bar{a}
\Rightarrow (n_1 - n_2) \bar{a} = \sum_{a_k'\in X_j} a_k - \sum_{a_k'\in X_i} a_k
\Rightarrow \sum_{a_k'\in X_j} a_k - \sum_{a_k'\in X_i} a_k \geq \bar{a},
\end{equation}
a contradiction since $\sum_{a_k'\in X_j} a_k < \bar{a}$. Thus, each partition $X_i$ is of equal size. Since $n=3m$, the size of each $X_i$ is 3.
\end{proof}

\begin{figure}[htb]
\centering
\includegraphics[width=0.55\textwidth]{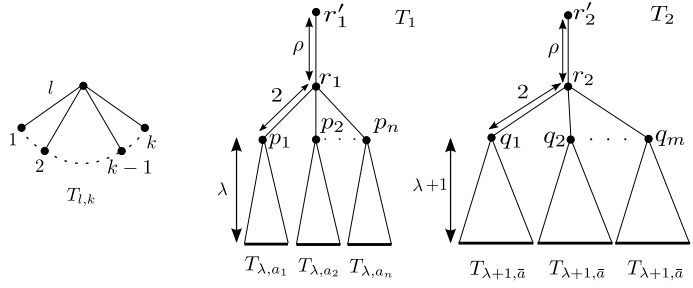}
\caption{\label{fig:reduction}The trees $T_{l,k}$, $T_1$ and $T_2$.}
\end{figure}

We now reduce an instance of \textsc{BAL-PART}, in which each $a_i \le n^c$ for some constant $c \ge 1$, to GH-distance computation.
Given an instance $X = \{a_1, \ldots, a_n\}$ and $1 \le m \le n$ of \textsc{BAL-PART}, we construct two trees $T_1$ and $T_2$ as follows. Let $\lambda > 6$ and $\rho < \lambda - 6$ be two positive constants. Let $T_{l,k}$ denote a star graph having $k$~edges, each of length~$l$. $T_1$ consists of a node $r_1$ incident on an edge $(r_1, r'_1)$ of length~$\rho$ and on $n$ edges $\{(r_1,p_1), \ldots, (r_1,p_n)\}$ of length~2, where $p_i$ is the center of a copy of $T_{\lambda,a_i}$. $T_2$ consists of a node $r_2$ incident on an edge $(r_2,r'_2)$ of length $\rho$ and to $m$ edges $\{(r_2,q_1), \ldots, (r_2,q_m)\}$ of length~2, where each $q_i$ is the center of a distinct copy of $T_{\lambda+1,\bar{a}}$, and $\bar{a}=\left(\sum_{i=1}^n a_i\right)/m$. See Figure~\ref{fig:reduction} for an illustration. 
We refer to the edges of $T_{\lambda, a_i}$ in $T_1$ and copies of $T_{\lambda+1, \bar{a}}$ in $T_2$ as \emph{bottom} edges. Let $\T_1$ and $\T_2$ denote the metric trees associated with $T_1$ and $T_2$ respectively. Since $\lambda, \rho$ are constants and $a_i \le n^c$ for all $1 \le i \le n$, this construction can be done in polynomial time.

\begin{lemma}
\label{lem:reduction}
If $(X,m)$ is a \emph{yes} instance of \textsc{BAL-PART}, then $d_{GH}(\T_1, \T_2) \leq 1$. Otherwise, $d_{GH}(\T_1, \T_2) \geq 3$.
\end{lemma}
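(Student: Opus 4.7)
I would prove the two directions separately. For the forward direction, I would construct an explicit correspondence of distortion at most~$2$. For the converse, I would argue the contrapositive by extracting a balanced partition from any correspondence of distortion less than~$6$.

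For the forward direction ($\textsc{yes} \Rightarrow d_{GH} \leq 1$), starting from a balanced partition $X_1, \ldots, X_m$ of $X$, I build a correspondence $\C$ as follows. The sticks $[r_1, r_1']$ and $[r_2, r_2']$ correspond isometrically. For each~$i$, I fix a bijection between the $\bar a$ bottom edges at $q_i$ in $\T_2$ and the $\bar a$ bottom edges distributed across the stars $\{T_{\lambda, a_j} : a_j \in X_i\}$ in $\T_1$, and pair each length-$\lambda$ edge of $\T_1$ with its matched length-$(\lambda+1)$ edge of $\T_2$ via a unit shift (leaves coincide, and the base $p_j$ of the $\T_1$-edge corresponds to the point at distance $1$ from $q_i$ along the matched $\T_2$-edge). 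The middle edges $(r_1, p_j)$ for $a_j \in X_i$ pair so that the midpoint of each such edge corresponds to $q_i$. A case analysis on pairs in $\C$ verifies $|\rho_1(x, x') - \rho_2(y, y')| \leq 2$: bottom-edge-internal pairs contribute at most~$1$ from the length mismatch; pairs across distinct $p_j, p_{j'}$ in the same $X_i$ gain $+4$ on the $\T_1$-side via $r_1$ but only $+2$ on the $\T_2$-side (via $q_i$ together with the two unit shifts); and pairs across different $X_i, X_{i'}$ have matching $+4$ via $r_1$ and $r_2$ with the unit shifts cancelling. Hence $\distortion(\C) \leq 2$ and $d_{GH}(\T_1, \T_2) \leq 1$.

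For the converse ($\textsc{no} \Rightarrow d_{GH} \geq 3$), I suppose a correspondence $\C$ exists with $\distortion(\C) < 6$ and derive a balanced partition of $X$. I proceed in two steps. Step~(a): define a map $\phi : \{1, \ldots, n\} \to \{1, \ldots, m\}$ sending $j$ to the index $i$ such that every $\C$-image of a leaf of $T_{\lambda, a_j}$ lies in the connected component of $\T_2 \setminus \{r_2\}$ containing $q_i$; then set $X_i := \{a_j : \phi(j) = i\}$. Well-definedness rests on showing that all leaves of $T_{\lambda, a_j}$ have $\C$-images in a common $q_i$-subtree. Any two such leaves $\ell, \ell'$ have distance $2\lambda$, so their images lie at distance greater than $2\lambda - 6$ in $\T_2$. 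Furthermore, the images $r_c, s_c$ of $r_1, r_1'$ are within $\rho + 6 < \lambda$ of each other (using $\rho < \lambda - 6$) and at distance greater than $\lambda - 4$ and $\lambda + \rho - 4$ respectively from any leaf image. Combining these inequalities with $\lambda > 6$ and the geometry of $\T_2$, I rule out the possibilities that leaf images straddle distinct $q_i$-subtrees or lie on the central stick $[r_2, r_2']$, forcing the common-subtree conclusion.

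Step~(b): verify $\sum_{a_j \in X_i} a_j = \bar a$ for each $i$. The $\bar a$ leaves of $T_{\lambda+1, \bar a}$ at $q_i$ are pairwise at distance $2\lambda + 2$, so their $\C$-pre-images are pairwise at distance greater than $2\lambda - 4$ in $\T_1$. By the symmetric version of step~(a) applied to $\T_2 \to \T_1$, these pre-images lie within the sub-forest $\bigcup_{\phi(j) = i} T_{\lambda, a_j}$, and the pairwise distance lower bound forces distinct leaves at $q_i$ to have pre-images concentrated near distinct $T_1$-leaves. This gives $\sum_{a_j \in X_i} a_j \geq \bar a$, and the global identity $\sum_j a_j = m \bar a$ then forces equality for every $i$, producing a balanced partition of $X$ and contradicting the no-instance hypothesis. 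The main obstacle is step~(a) of the converse: ruling out leaf-images that split across distinct $q_i$-subtrees requires a delicate geometric argument in $\T_2$ that leverages all the available distance constraints under the tight parameter regime $\lambda > 6$, $\rho < \lambda - 6$, since the many-to-many nature of the correspondence provides considerable slack.
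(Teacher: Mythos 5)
Your forward direction matches the paper's: the same piecewise-linear correspondence with distortion at most $2$.

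Your converse takes a genuinely different route, and step~(a) has a real gap. You want to send each star $T_{\lambda,a_j}$ of $\T_1$ to a single $q_i$-subtree of $\T_2$, i.e., to show all $\C$-images of all leaves of $T_{\lambda,a_j}$ land in one component of $\T_2\setminus\{r_2\}$. The constraints you invoke do not force this. Two leaves $\ell,\ell'$ of one $\T_1$-star are only at distance $2\lambda$, and that distance is readily realized by two points of $\T_2$ in \emph{different} $q_i$-subtrees (each at distance roughly $\lambda-3$ from $r_2$), so the bound on the pair $(\ell,\ell')$ alone permits a split. The auxiliary constraints via images $r_c$ of $r_1$ and $s_c$ of $r'_1$ do not rescue it: with $\lambda=7$, $\rho=0.5$, taking $z_1,z_2$ at distance $5$ from $r_2$ in distinct $q_i$-subtrees, $r_c=r_2$, and $s_c$ at distance $1$ along $[r_2,r'_2]$ satisfies every inequality you list ($d(z_1,z_2)=10>2\lambda-6$, $d(r_c,s_c)=1<\rho+6$, $d(r_c,z_i)=5>\lambda-4$, $d(s_c,z_i)=6>\lambda+\rho-4$, and all the corresponding upper bounds). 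Since the paper works under only $\lambda>6$, $\rho<\lambda-6$, your step~(a) is not established in that regime, and it is unclear it even holds there; taking $\lambda$ much larger might help, but you would still need a far more detailed case analysis than you sketch. A second problem: step~(b) appeals to ``the symmetric version of step~(a),'' but the honest symmetric statement would place the $\bar a$ pre-images of the leaves at a fixed $q_i$ into a \emph{single} $\T_1$-star, which is false whenever $|X_i|>1$; the claim you actually use (that they land in $\bigcup_{\phi(j)=i}T_{\lambda,a_j}$) is a different assertion needing its own justification.

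The paper avoids all of this by arguing in the opposite direction, which is where the slack really is. For any two leaves $l,m\ne r'_2$ of $\T_2$ we have $d(l,m)\ge 2\lambda+2$, so any images $l',m'$ satisfy $d(l',m')>2\lambda-4$; a one-line case analysis (same edge gives $\le\lambda$, either point on a middle edge $r_1p_i$ or on $(r_1,r'_1)$ gives a bound strictly below $2\lambda-4$ using $\lambda>6$, $\rho<\lambda-6$) forces $l',m'$ onto \emph{distinct} bottom edges of $\T_1$. That is an injection, hence a bijection, from $\T_2$-leaves to $\T_1$-leaves, and one further comparison ($2\lambda+6$ between leaves at different $q_i$'s versus at most $2\lambda$ between $\T_1$-leaves at a common $p_j$) shows the bijection respects the grouping, so the partition drops out by counting. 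The asymmetry you should have exploited is that $\T_2$-leaves are far enough apart ($\ge 2\lambda+2$) to be localized in $\T_1$ in one step, whereas $2\lambda$ between $\T_1$-star leaves is too small to localize them in $\T_2$; you picked the harder direction.
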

\begin{proof}
Suppose $X$ can be partitioned into $m$ subsets $X_1, \ldots, X_m$ of equal weight $\bar{a}=\left(\sum_{i=1}^n a_i\right)/m$. We construct a correspondence $\C$ between $\T_1$ and $\T_2$ with distortion at most 2, implying that $d_{GH}(\T_1, \T_2) \le 1$. A linearly interpolated bijection between the points of edges $(r_1,r'_1)$ and $(r_2,r'_2)$, with $r_1$ mapping to $r_2$ and $r'_1$ mapping to $r'_2$, is added to $\C$. If $a_i \in X_j$,  the linearly interpolated bijection between edges $(r_1,p_i)$ and $(r_2,q_j)$ is added to $\C$. Also, the leaves of $T_{\lambda,a_i}$ are each mapped to a distinct leaf of $T_{\lambda+1,\bar{a}}$ attached to $q_j$ such that there is a bijection between the leaves of $T_1$ and $T_2$ -- this is possible since $T_{\lambda+1,\bar{a}}$ has $\bar{a}$ leaves, and $\sum_{a \in X_j} a = \bar{a}$. The interior points of these leaf edges are mapped using linear interpolation. Overall, the distortion induced by $\C$ is at most $2$ -- this stems from the fact that $\C$ is piecewise linear, and the difference between the length of any path in one tree and its image under $\C$ in the other tree is at most $2$.

Suppose $d_{GH}(\T_1, \T_2) < 3$, and let $\C$ be a correspondence between $\T_1$ and $\T_2$ with distortion $<6$. Consider two leaves $l,m \ne r'_2$ in $\T_2$. Then $d(l,m) \ge 2 \lambda + 2$. Let $l',m'$ be their corresponding images in $\T_1$ under $\C$. We argue that $l', m'$ lie on distinct bottom edges of $\T_1$. Indeed, since $\distortion(\C) < 6$, the distance between $l'$ and $m'$ is $d(l',m') > d(l,m) - 6 > 2\lambda - 4$. If $l', m'$ lie on the same edge of $\T_1$, then $d(l',m') \le \lambda < 2\lambda - 4$, so they have to lie on distinct edges of $\T_1$. If either of $l', m'$ lies on an edge $r_1p_i$, for some $i \le n$, then by construction and the choice of $\rho$, $d(l',m') \le \lambda + 2 < 2\lambda - 4$ (recall that $\lambda > 6$). 
Finally, if either of $l',m'$ lies on $(r_1,r'_1)$ then $d(l',m') \le \rho + \lambda + 2 < 2\lambda - 4$. Thus, both $l'$ and $m'$ lie on distinct bottom edges of $\T_1$. Hence, $\C$ induces a bijection $\chi$ between the leaves of $\T_1$ and $\T_2$, where $\chi(l) = l'$ for $l \in \T_2$ and $l'\in \T_1$ is the leaf whose incident edge contains the image(s) of $l$ under $\C$. Note that if $l_i, l_j \in \T_2$ are incident to $q_i, q_j$ with $q_i \ne q_j$, then $\chi(l_i)$ and $\chi(l_j)$ are incident to $p_{i'}, p_{j'}$ with $p_{i'} \ne p_{j'}$; otherwise $d(l_i,l_j) = 2\lambda + 6$ and $d(\chi(l_i), \chi(l_j)) \le 2 \lambda$, thereby incurring a distortion of at least $6$. Hence, the bijection $\chi$ can be used to partition $X$ into $m$ subsets $X_1,\ldots,X_m$ of equal weight as follows : if $\chi(l) = l'$ for $l$ incident to $q_i$ and $l'$ incident to $p_j$, then $a_j \in X_i$. Thus, $(X,m)$ is a \emph{yes} instance of \textsc{BAL-PART}.\end{proof}

We may also apply the reduction to metric trees with unit edge lengths by subdividing
longer edges with an appropriate number of vertices. We thus have the following theorem.
\begin{theorem}
\label{thm:hardness}
Unless $\mathrm{P=NP}$, there is no polynomial-time algorithm to approximate the Gromov-Hausdorff distance between two metric trees to a factor better than 3, even in the case of metric trees with unit edge lengths.
\end{theorem}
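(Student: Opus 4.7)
The plan is to obtain the theorem as a direct corollary of Lemmas~\ref{lem:np} and~\ref{lem:reduction}, after a small tweak to the construction so that all edge lengths are unit. Given a \textsc{BAL-PART} instance $(X, m)$ with $a_i \le n^c$, the construction in Section~\ref{section:hardness} produces trees $\T_1,\T_2$ in polynomial time such that $d_{GH}(\T_1,\T_2) \le 1$ in the yes case and $d_{GH}(\T_1,\T_2) \ge 3$ in the no case. Thus any polynomial-time algorithm that approximates $d_{GH}$ within a factor strictly less than $3$ could be used to decide \textsc{BAL-PART}, which by Lemma~\ref{lem:np} is NP-hard. This already gives the statement for metric trees with edges of constant integer length, so the only thing left to justify is the unit-edge-length version.

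To that end, I would first fix the constants in the reduction to be positive integers; concretely, take $\lambda = 8$ and $\rho = 1$, which satisfy $\lambda > 6$ and $0 < \rho < \lambda-6$ as required by the proof of Lemma~\ref{lem:reduction}. All edge lengths appearing in $T_1$ and $T_2$, namely $\rho,\,2,\,\lambda,\,\lambda+1$, are then positive integers bounded by a constant. Next, I would subdivide every edge of integer length $l$ into $l$ unit-length edges by inserting $l-1$ new degree-two vertices along it. This subdivision does not alter the geometric realization or the shortest-path metric on the tree, so the associated metric trees (and hence their GH distance) are unchanged; the $1$ vs.\ $3$ gap from Lemma~\ref{lem:reduction} is preserved exactly.

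Finally, I would verify that the size of the resulting unit-length trees is polynomial in the input size. The original $T_1,T_2$ have $O(n + \sum_i a_i) = O(n^{c+1})$ vertices, and since each edge has length $O(1)$, subdivision only multiplies this by a constant factor. Composing with the strong NP-hardness of \textsc{BAL-PART} completes the reduction and yields the theorem. I do not anticipate any real obstacle here; the only subtlety is the choice of integer constants $\lambda,\rho$ compatible with the strict inequalities needed in the gap argument of Lemma~\ref{lem:reduction}, which is handled by taking $\lambda=8,\rho=1$.
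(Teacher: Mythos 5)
Your proposal is correct and takes essentially the same approach as the paper: the paper likewise obtains the theorem from Lemmas~\ref{lem:np} and~\ref{lem:reduction} by subdividing edges into unit-length pieces, merely leaving the integer constants and the size bound implicit. Your explicit choice $\lambda=8$, $\rho=1$ satisfies every strict inequality used in the proof of Lemma~\ref{lem:reduction}, and the $O(n^{c+1})$ size bound after subdivision is correct.
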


\section{Gromov-Hausdorff and Interleaving Distances}
\label{sec:GH-interleaving}
In this section we show that the GH distance between two tree metric spaces $\T_1$ and  $\T_2$, and the interleaving distance between two appropriately defined trees incuded from $\T_i$s, are within constant factors of each other. 

Given a metric tree $\T = (T,d)$, let $V(T)$ denote the nodes of the tree. 
Given a point $s \in T$ (not necessarily a node), let $f_s : T \rightarrow \mathbb{R}$ be defined as $f_s(x)$~$=$~$-$~$d(s,x)$. Equipped with this function, we obtain a merge tree $T^s$ from $\T$. 
Intuitively, $T^s$ has the structure of rooting $T$ at $s$, and then adding an extra edge incident to $s$ with function value extending from $0$ to $+\infty$. If $s$ is an internal node of $T$ or an interior point of an edge of $T$, $s$ remains the root of $T^s$. But if $s$ is a leaf of $T$, then $s$ gets merged with the infinite edge and the node of $T$ adjacent to $s$ becomes the root of $T^s$.

Let $\T_1 = (T_1, d_1)$ and $\T_2 = (T_2, d_2)$ be two metric trees. Define
\begin{equation}
\label{eq:def_del} \Delta = \min_{u \in V(T_1), v \in V(T_2)} d_I(T_1^u, T_2^v).
\end{equation}
We prove that $\Delta$ is within a constant factor of $d_{GH}(\T_1, \T_2)$. We first prove a lower bound on $\Delta$.
\begin{lemma}
\label{lem:gh_lb} $\frac{1}{2} d_{GH}(\T_1, \T_2) \le \Delta.$
\end{lemma}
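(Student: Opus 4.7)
The plan is, given $u \in V(T_1)$, $v \in V(T_2)$, and any $\eps > d_I(T_1^u, T_2^v)$, to convert a pair of $\eps$-compatible maps $\alpha : T_1^u \to T_2^v$ and $\beta : T_2^v \to T_1^u$ into a correspondence $\C \in \Pi(T_1, T_2)$ of distortion at most $4\eps$. That yields $d_{GH}(\T_1, \T_2) \le 2\eps$; taking the infimum over $\eps$ and the minimum over $u, v$ then gives $\tfrac{1}{2} d_{GH}(\T_1, \T_2) \le \Delta$.

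First I would use the natural identification (sketched just before the lemma) of each point $x \in T_1$ with the point of $T_1^u$ at height $\hat{f}_u(x) = -d_1(u,x)$ lying on the branch of $x$, and similarly for $T_2$ and $T_2^v$. Because $T_i$ is a tree, the path between any two of its points passes through their LCA in $T_i^{s}$, which gives the identity
\[
d_1(x_1, x_2) = 2\hat{f}_u(L) - \hat{f}_u(x_1) - \hat{f}_u(x_2), \qquad L := \mathrm{LCA}_{T_1^u}(x_1, x_2),
\]
and an analogous formula in $T_2$. I would then take
\[
\C = \{(x, \alpha(x)) : x \in T_1\} \cup \{(\beta(y), y) : y \in T_2\},
\]
which is manifestly a correspondence.

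The key lemma is the following LCA-height stability: for any two pairs $(x_1, y_1), (x_2, y_2) \in \C$, setting $L_f := \mathrm{LCA}_{T_1^u}(x_1, x_2)$ and $L_g := \mathrm{LCA}_{T_2^v}(y_1, y_2)$,
\[
\bigl|\hat{f}_u(L_f) - \hat{g}_v(L_g)\bigr| \le \eps.
\]
Indeed, because $\alpha$ preserves ancestry, $\alpha(L_f)$ is a common ancestor of $\alpha(x_1)$ and $\alpha(x_2)$ in $T_2^v$. When $y_i = \alpha(x_i)$ this gives what we want directly; when instead $x_i = \beta(y_i)$, the identity $\alpha \circ \beta = \shift_g^{2\eps}$ shows $\alpha(x_i)$ is an ancestor of $y_i$, so $\alpha(L_f)$ remains a common ancestor of $y_1, y_2$. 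Hence $\hat{g}_v(L_g) \le \hat{g}_v(\alpha(L_f)) = \hat{f}_u(L_f) + \eps$; the symmetric application of $\beta$ gives the reverse inequality. Combining this bound with the height identities $\hat{g}_v(\alpha(x)) = \hat{f}_u(x) + \eps$ and $\hat{f}_u(\beta(y)) = \hat{g}_v(y) + \eps$ in the path-length formulas, a short calculation verifies $|d_1(x_1, x_2) - d_2(y_1, y_2)| \le 4\eps$ in each of the three cases (both pairs from $\alpha$, both from $\beta$, or mixed).

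The main obstacle is organizing this case analysis so that ancestor-preservation is used correctly when at least one of the paired points comes from the $\beta$-side of $\C$; the mixed case looks the most delicate but, as sketched above, reduces to the same ancestor argument once the $\shift^{2\eps}$ identities are invoked. A minor technicality is checking that the identification $T_i \hookrightarrow T_i^{u}$ behaves well when $u$ (or $v$) is a leaf of $T_i$ and therefore lies on the infinite edge above the merge-tree root; since $\hat{f}_u$ and LCAs remain well defined on the image of $T_i$, the path-length formula carries through without modification.
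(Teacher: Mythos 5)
Your approach mirrors the paper's closely (path-through-LCA decomposition, LCA-height stability via ancestry preservation, $4\eps$ bookkeeping), but there is a genuine gap in the construction of $\C$. You define $\C = \{(x,\alpha(x)) : x \in T_1\} \cup \{(\beta(y),y) : y \in T_2\}$ and declare it ``manifestly a correspondence.'' It is not: $\alpha$ and $\beta$ are maps between the merge trees $T_1^u$ and $T_2^v$, which include an infinite edge above the root, and under your identification $T_2 \hookrightarrow T_2^v$ the tree $T_2$ sits entirely at height $\le 0$. For any $x$ with $d_1(u,x) < \eps$ (in particular $x = u$ itself when $\eps > 0$), $\hat{g}_v(\alpha(x)) = \hat{f}_u(x) + \eps > 0$, so $\alpha(x)$ lands on the infinite tail, outside $T_2$. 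Thus your $\C$ is a subset of $T_1^u \times T_2^v$ but not of $T_1 \times T_2$, and the GH distortion bound does not apply to it as stated.

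The paper repairs exactly this by introducing clamped maps $\alpha^*$ (send $x$ to the root $t$ when $\alpha(x)$ would land above height $0$) and $\beta^*$ (symmetrically), and then forming $\C$ from $\alpha^*, \beta^*$. With the clamped maps, the height inequalities degrade from exact equality $\hat{g}_v(\alpha(x)) = \hat{f}_u(x) + \eps$ to the one-sided $\hat{f}_u(x) \le \hat{g}_v(\alpha^*(x)) \le \hat{f}_u(x) + \eps$, and the LCA-ancestry claim must be re-verified for $\alpha^*$ (it still holds: if $\alpha^*(L_f) = t$ it is trivially a common ancestor of $y_1,y_2$; otherwise $\alpha^*(L_f) = \alpha(L_f)$ and your ancestry argument applies). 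The ``minor technicality'' you flag (what happens when $u$ is a leaf) is a separate, and indeed minor, issue; the clamping step you omit is the substantive one, and without it the argument does not yield a correspondence between $\T_1$ and $\T_2$.
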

\begin{proof}
Suppose $\Delta = d_I(T^s_1, T^t_2)$ for some $s \in V(T_1)$ and $t \in V(T_2)$. Set $f:= f_s$ and $g:= f_t$. Let ${\alpha : T^s_1 \rightarrow T^t_2}, {\beta : T^t_2 \rightarrow T^s_1}$ be $\Delta$-compatible maps. We define the functions $\alpha^* : T_1^s \rightarrow T^t_2$ and $\beta^* : T^t_2 \rightarrow T^s_1$ as follows :
\begin{multicols}{2}
\noindent
\begin{align*}
\alpha^*(x) = 
\begin{cases}
\alpha(x) &\text{if }g(\alpha(x)) \le 0.\\
t &\text{otherwise.}
\end{cases}
\end{align*}
\begin{align*}
\beta^*(y) = 
\begin{cases}
\beta(x) &\text{if }f(\beta(x)) \le 0.\\
s &\text{otherwise.}
\end{cases}
\end{align*}
\end{multicols}
That is, if $\alpha(x)$ is an ancestor of $t$ (resp. $s$) then $x$ (resp. $y$) is mapped to the root $t$ (resp. $s$).
We note that 
\begin{equation}
\label{eq:starDelta}
\begin{aligned}
f(x) \le g(\alpha^*(x)) \le f(x) + \Delta,\\
g(y) \le f(\beta^*(y)) \le g(y) + \Delta.
\end{aligned}
\end{equation}
Indeed, if $\alpha^*(x) = \alpha(x)$ then $g(\alpha^*(x)) = f(x) + \Delta$. Otherwise $g(\alpha(x)) > 0$ and $g(\alpha^*(x)) = 0$. Since $f(x) \le 0$, we obtain $g(\alpha^*(x)) < g(\alpha(x)) = f(x) + \Delta$. The same argument implies the second set of inequalities.

Consider the correspondence $\C \in T_1 \times T_2$ induced by $\alpha^*$ and $\beta^*$ defined as: 
$$\C:= \{(x, \alpha^*(x)) \mid x\in T_1\} \cup \{(\beta^*(y), y) \mid y\in T_2 \}. $$
We prove that $\distortion(\C) \leq 4 \Delta $.
 
Indeed, consider any two pairs $(x_1, y_1), (x_2, y_2) \in \C$. 
Let $u$ be the common ancestor of $x_1$ and $x_2$ in $T_1$, and $w$ the common ancestor of $y_1$ and $y_2$ in $T_2$. 
Note that since $T_1$ and $T_2$ are trees, there is a unique path $x_1 \leadsto u \leadsto x_2$ between $x_1$ and $x_2$, such that $x_1\leadsto u$ and $u \leadsto x_2$ are each monotone in function $f$ values. 
This also implies that $d_1(x_1, u) = d_1 (s, x_1) - d_1(s, u) = f(u) - f(x_1)$; similarly, $d_1(x_2, u) = f(u) - f(x_2)$. 
Symmetric statements hold for $y_1 \leadsto w \leadsto y_2$. 
Hence  
\begin{align*}
d_1 (x_1, x_2) = d_1(x_1, u) + d_1(u, x_2) = 2f(u) - f(x_1) - f(x_2),\\
d_2 (y_1, y_2) = d_2(y_1, w) + d_2(w, y_2) = 2g(w) - g(y_1) - g(y_2).
\end{align*}

We then have,
\begin{align*}
|d_1 (x_1, x_2) - d_2 (y_1, y_2)| &= | 2f(u) - f(x_1) - f(x_2) - 2g(w) + g(y_1) + g(y_2) |\\
&\le 2 |f(u) - g(w)| + |f(x_1) - g(y_1)| + |f(x_2) - g(y_2)|\\
&\le 2|f(u) - g(w))| + 2\Delta \text{    (by \eqref{eq:starDelta})}.
\end{align*}
On the other hand, $\alpha^*(u)$ must be an ancester of $w$, and similarly, $\beta^*(w)$ must be an ancester of $u$.
Thus, $f(u) - \Delta \leq g(w) \leq f(u) + \Delta \Rightarrow |f(u) - g(w)| \leq \Delta$. 
We thus have 
$$ |d_1 (x_1, x_2) - d_2 (y_1, y_2)| \leq 4 \Delta.$$
It then follows that $\distortion(\C) \le 4\Delta$. Since $d_{GH}(\T_1, \T_2) \le \frac{1}{2}\distortion(\C)$, the left inequality then follows.
\end{proof}
Next, we prove an upper bound on $\Delta$.

\begin{lemma}
\label{lem:gh_ub}
$\Delta \le 14 d_{GH} (\T_1, \T_2).$
\end{lemma}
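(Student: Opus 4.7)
The plan is to construct, from a correspondence $\C$ with $\distortion(\C) \le 2\eps$ where $\eps = d_{GH}(\T_1, \T_2)$, a vertex pair $(s,t) \in V(T_1) \times V(T_2)$ together with $\delta$-compatible maps $\alpha: T_1^s \to T_2^t$ and $\beta: T_2^t \to T_1^s$ for some $\delta \le 14 \eps$, which would yield $\Delta \le d_I(T_1^s, T_2^t) \le 14\eps$. I would start by choosing any $(s_0, t_0) \in \C$ and taking $s,t$ to be vertices close to $s_0, t_0$ respectively (subdividing long edges of $T_1, T_2$ if necessary, which does not alter the underlying metric trees); the drift $d_1(s, s_0) + d_2(t, t_0)$ is absorbed into the constant $14$. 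For clarity, the rest of the plan treats the clean case $(s,t) \in \C$, which already produces a base constant of roughly $3$.

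The construction would proceed leaf-by-leaf: for each leaf $x$ of $T_1$, pick a leaf $y = y(x)$ of $T_2$ with $(x,y) \in \C$, and set $\alpha(\tilde x)$ to be the ancestor of $\tilde y$ in $T_2^t$ at height $\hat f(\tilde x) + \delta$; define $\beta$ symmetrically from choices $x(y)$ for each leaf $y$ of $T_2$. Extend both maps to all internal points of the respective merge trees via the ancestor-shift rule $\alpha(\shift_f^a(\tilde p)) = \shift_g^a(\alpha(\tilde p))$ (and similarly for $\beta$). The height condition in \eqref{eq:inter_cond} would then hold by construction.

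The main step is to check well-definedness of this extension on internal points: for two leaves $x_1, x_2$ of $T_1$ with LCA $u$ in $T_1$ rooted at $s$, the ancestors of $\alpha(\tilde x_1)$ and $\alpha(\tilde x_2)$ at height $\hat f(\tilde u) + \delta$ must coincide. Let $y_i = y(x_i)$ and let $w \in T_2$ be the branch point of the paths $t \leadsto y_1$ and $t \leadsto y_2$. Using the distortion bound on the pairs $(s,t), (x_1, y_1), (x_2, y_2)$ and combining the tree-metric identities $d_1(s,x_1) + d_1(s,x_2) - d_1(x_1, x_2) = 2 d_1(s,u)$ and $d_2(t,y_1) + d_2(t,y_2) - d_2(y_1, y_2) = 2 d_2(t,w)$, I would derive
\[
| d_2(t, w) - d_1(s, u) | \le 3\eps.
\]
Thus, provided $\delta \ge 3\eps$, both ancestors equal $\shift_g^{\delta + (d_1(s, u) - d_2(t, w))}(\tilde w)$, which establishes both well-definedness and the ancestor-preserving property.

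For the composition $\beta \circ \alpha = \shift_f^{2\delta}$ (in the relaxed sense of the remark after \eqref{eq:inter_cond}), consider a leaf $x_0$ of $T_1$. Unrolling the construction using the commutation of $\beta$ with shifts gives $\beta(\alpha(\tilde x_0))$ as the ancestor of $\tilde{x'}$ at height $\hat f(\tilde x_0) + 2\delta$, where $x' = x(y_0)$ with $y_0 = y(x_0)$. Since $(x_0, y_0), (x', y_0) \in \C$, distortion forces $d_1(x_0, x') \le 2\eps$, so the LCA of $\tilde x_0$ and $\tilde{x'}$ in $T_1^s$ lies at height at most $\hat f(\tilde x_0) + 2\eps$, and $\beta(\alpha(\tilde x_0))$ is indeed an ancestor of $\tilde x_0$ at the required height as long as $\delta \ge \eps$. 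The main obstacle will be the branch-point calculation $|d_2(t, w) - d_1(s, u)| \le 3\eps$, which propagates three distortion applications through the LCA/branch-point identities of the two trees; a secondary obstacle is aligning the leaf-level choices of $\alpha$ and $\beta$ so that the composition lands on an ancestor of $\tilde x_0$ rather than a ``cousin'', for which the bound $d_1(x_0, x(y(x_0))) \le 2\eps$ is crucial.
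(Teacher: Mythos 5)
Your core construction of $\alpha$ and $\beta$ directly from the correspondence $\C$ (anchoring the images at leaves, extending by the shift rule, and checking well-definedness via the branch-point bound $|d_1(s,u)-d_2(t,w)|\le 3\eps$) is a genuinely different route from the paper, which instead cites an external result to get $d_I(T_1^s,T_2^t)\le 6\,d_{GH}$ once $(s,t)\in\C^*$ with $s\in V(T_1)$. That part of your argument looks essentially sound and would, if it went through cleanly, even improve the sub-constant from $6$ to roughly $3$.

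The fatal gap is your treatment of the requirement that $s$ and $t$ be \emph{vertices} of $T_1$ and $T_2$. The quantity $\Delta$ in \eqref{eq:def_del} is a minimum over $u\in V(T_1)$, $v\in V(T_2)$, where $V(T_1),V(T_2)$ are the given combinatorial vertex sets; this is what the algorithm in Section~\ref{sec:compute_id} enumerates. Subdividing long edges does not alter the metric spaces $\T_1,\T_2$ (so $d_{GH}$ is unchanged), but it enlarges $V(T_1),V(T_2)$ and therefore changes $\Delta$ --- you would be proving a bound on a different, smaller quantity. Moreover, even setting that aside, you give no argument that $t_0$ is close to \emph{any} vertex of $T_2$: an optimal correspondence can pair $s_0$ with a point in the interior of an arbitrarily long edge of $T_2$. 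Bounding this drift is exactly what the paper's Claim~\ref{claim:c} does, and that argument is not free: it relies on choosing $s$ to be an endpoint of a \emph{longest simple path} in $T_1$ (so that, rooting at $s$, the inequality $c_0\ge\min\{a-c_0,b-c_0\}$ holds for the common ancestor of any two points), and the resulting bound $d_2(t,z)\le 8\delta$ is then combined with the interleaving stability theorem and the triangle inequality to land on a vertex $z$. Your proposal picks an \emph{arbitrary} $(s_0,t_0)\in\C$ and simply declares the drift ``absorbed into the constant,'' which is not justified. You would need to (a) pick $s$ to be an extreme leaf of $T_1$ as in the paper and (b) supply a Claim~\ref{claim:c}-style argument bounding $d_2(t,z)$ by $O(d_{GH})$ for some $z\in V(T_2)$, then stitch the bounds together via stability and the triangle inequality.

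A secondary, smaller issue: you write ``pick a leaf $y=y(x)$ of $T_2$ with $(x,y)\in\C$,'' but a correspondence need not pair a leaf of $T_1$ with any leaf of $T_2$. Your composition argument uses $d_1(x_0,x(y_0))\le 2\eps$, which requires $y_0$ to lie in the domain of the leaf-indexed choice function $x(\cdot)$; if $y_0$ is not a leaf this step needs reworking (e.g., anchoring at a leaf descendant of $y_0$ and redoing the distance bookkeeping). This is fixable, but as written it is a gap.
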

\begin{proof}
Set $\delta = d_{GH}(\T_1, \T_2)$ and let $\C^*: T_1\times T_2$ be an optimal correspondence that achieves  $d_{GH}(\T_1, \T_2)$. Note that in general $d_{GH}(\T_1, \T_2)$ may only be achieved in the limit. In that case, our proof can be modified by considering a sequence of near-optimal correspondences (whose associated metric-distortion converges to $\delta$), and taking a certain limit under it.

Let $s$ be one of the endpoints of a longest simple path in $T_1$ (i.e, the length of this path realizes the diameter of $\T_1$); $s$ is necessarily a leaf of $T_1$. Let $(s,t)$ be a pair in $\C^*$. Consider the merge trees $T_1^s$ and $T_2^t$ defined by the functions $f_s$ and $f_t$, respectively. A result in \cite{dey} implies that $$d_I(T_1^s, T_2^t) \le 6 \delta.$$ We prove below in Claim \ref{claim:c} that there is a vertex (in fact a leaf) $z \in V(T_2)$ such that $d_2(t,z) \le 8 \delta$.

It is easy to verify that $$\|f_t - f_z\|_{\infty} \le d_2(t,z) \le 8 \delta.$$

On the other hand, by the stability theorem of the interleaving distance (Theorem 2 of \cite{morozov}),
$$ d_I(T_2^t, T_2^z) \le \|f_t - f_z\|_{\infty} \le 8\delta.$$

By triangle inequality,
\begin{align*}
d_I(T^s_1, T^z_2) \le& d_I(T^s_1, T^t_2) + d_I(T^t_2, t^z_2)\\
\le& 6 \delta + 8 \delta \\
\le& 14 \delta.
\end{align*}
This completes the proof of the lemma.
\end{proof}

\begin{claim}
\label{claim:c}
Let $s$ be an endpoint of a longest simple path in $T_1$, and let $(s,t)$ be a pair in $\C^*$. Then there is a vertex $z \in V(T_2)$ such that $d_2(t,z) \le 8\delta$.
\end{claim}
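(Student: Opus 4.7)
The plan is to exploit the fact that $s$ is an endpoint of a longest simple path in $T_1$ to pin down the eccentricity of $t$ in $T_2$, and then argue that this forces $t$ to sit near a leaf whenever $t$ itself is not already one. Let $D_1$ denote the diameter of $T_1$, attained by $s$ and some other leaf $s'$. First I would establish two basic inequalities from the optimal correspondence $\C^*$: (i) for $(s',t') \in \C^*$ we have $d_2(t,t') \ge D_1 - 2\delta$, so the eccentricity $E_t := \sup_{y\in T_2} d_2(t,y)$ satisfies $E_t \ge D_1 - 2\delta$; and (ii) taking preimages under $\C^*$ of the diameter endpoints of $T_2$ gives $D_2 \le D_1 + 2\delta$ where $D_2$ is the diameter of $T_2$.

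Next I would split into cases on where $t$ lies in $T_2$. If $t$ is already a vertex (in particular a leaf), take $z = t$ and we are done. Otherwise, $T_2 \setminus \{t\}$ has at least two components. Let $B$ be the component containing a leaf that (nearly) realizes $E_t$, and let $A$ be any other component; each component contains at least one leaf. Let $z_B \in B$ be a farthest leaf from $t$, so $d_2(t,z_B) \ge E_t \ge D_1 - 2\delta$, and let $z_A \in A$ be a farthest leaf from $t$ among leaves in $A$.

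Since $z_A$ and $z_B$ lie in different components of $T_2 \setminus \{t\}$, the unique path between them in $T_2$ passes through $t$, so
\begin{equation*}
d_2(z_A,z_B) \;=\; d_2(t,z_A) + d_2(t,z_B) \;\le\; D_2 \;\le\; D_1 + 2\delta.
\end{equation*}
Combining with the lower bound on $d_2(t,z_B)$ yields $d_2(t,z_A) \le (D_1 + 2\delta) - (D_1 - 2\delta) = 4\delta$, which is at most $8\delta$. Taking $z = z_A$, a leaf and hence a vertex of $T_2$, finishes the claim.

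I do not expect a genuine obstacle here; the main subtlety is making sure that (a) the two-sided decomposition of $T_2$ at $t$ is valid (which requires the easy observation that every component of $T_2 \setminus \{t\}$ contains a leaf), and (b) the bounds $E_t \ge D_1 - 2\delta$ and $D_2 \le D_1 + 2\delta$ genuinely depend on $s$ being a diameter endpoint of $T_1$ — the first would fail for a generic $s$, and this is precisely why the longest-path hypothesis was inserted in Lemma~\ref{lem:gh_ub}.
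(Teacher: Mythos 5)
Your proof is correct and takes a genuinely different route from the paper's. The paper argues by contradiction: it supposes no vertex of $T_2$ lies within $8\delta$ of $t$, places two points $u_1,u_2$ at distance $8\delta+\nu$ on either side of $t$, pulls them back to $T_1$ via $\C^*$, and then uses the merge tree $T_1^s$ --- specifically, the nearest common ancestor $w$ of $\tilde u_1,\tilde u_2$ and the fact that $s$ is a diameter endpoint forces $d_1(s,w)$ to be large --- to derive incompatible upper and lower bounds on $d_1(\tilde u_1,\tilde u_2)$. Your argument is direct and more elementary: from $s$ being a diameter endpoint of $T_1$ and $\distortion(\C^*)\le 2\delta$ you get $E_t \ge D_1 - 2\delta$, and from $\C^*$ alone you get $D_2 \le D_1 + 2\delta$; then if $t$ is interior to an edge, cutting $T_2$ at $t$ and summing the two side lengths against $D_2$ forces the ``short'' side to contain a leaf within $D_2 - E_t \le 4\delta$ of $t$. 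This avoids the merge-tree machinery entirely and even sharpens the constant from $8\delta$ to $4\delta$, which would propagate to improve the constant in Lemma~\ref{lem:gh_ub}. Two cosmetic quibbles: the parenthetical ``in particular a leaf'' is wrong when $t$ is an internal vertex (though taking $z=t$ is still fine), and in your closing remark only the bound $E_t \ge D_1 - 2\delta$ actually uses the diameter-endpoint hypothesis --- $D_2 \le D_1 + 2\delta$ holds for any correspondence regardless of how $s$ was chosen.
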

\begin{proof}
Assume that there is no tree node within $8\delta$ distance to $t$. In this case, $t$ must be in the interior of an edge $e \in E(T_2)$. 
Let $u_1$ and $u_2$ be the two points in $e$ from opposite sides of $t$ such that $d_2(t, u_1) = d_2(t, u_2) = 8\delta + \nu$, where $\nu > 0$ is an arbitrarily small value. Both $u_1$ and $u_2$ exist, as there is no tree node of $T_2$ within $8\delta$ distance to $t$, and $$d_2(u_1,u_2) = d_2(t, u_1) + d_2(t, u_2) = 16\delta + 2\nu.$$
Let $\tilde{u}_1, \tilde{u}_2 \in T_1$ be any corresponding points for $u_1$ and $u_2$ under $\C^*$, that is, $(\tilde{u}_1, u_1), (\tilde{u}_2, u_2) \in \C^*$. Since $\distortion(\C^*) \le 2\delta$, we have 
\begin{equation}
\label{eq:5}
d_1(\tilde{u}_1, \tilde{u}_2) \ge 14\delta + 2\nu. 
\end{equation}
On the other hand, since $d_2(t, u_1) = d_2(t, u_2) = 8\delta + \nu$, we have that
\begin{equation}
\label{eq:6}
d_1(s,\tilde{u}_1), d_1(s, \tilde{u}_2) \in [6\delta + \nu, 10\delta + \nu].
\end{equation}
We now obtain an upper bound on $d_1(\tilde{u}_1, \tilde{u}_2)$.
\begin{figure}[htb]
\centering
\includegraphics[width=0.15\textwidth]{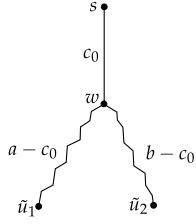}
\caption{\label{fig:ancestor}$w$ is the nearest common ancestor of $\tilde{u}_1$ and $\tilde{u}_2$ in $T^s_1$.}
\end{figure}
If $\tilde{u}_1$ and $\tilde{u}_2$ have ancestor/descendant relation in $T^s_1$, then $d_1(\tilde{u}_1, \tilde{u}_2) = |d_1(s,\tilde{u}_1) - d_2(s,\tilde{u}_2)|$ and by \eqref{eq:6}, we thus have that $d_1(\tilde{u}_1, \tilde{u}_2) \le 4\delta$, which contradicts \eqref{eq:5}.

Now, let $w$ be the nearest common ancestor of $\tilde{u}_1$ and $\tilde{u}_2$ in $T^s_1$ (see Figure~\ref{fig:ancestor}). Let $c_0 = d_1(s,w)$. For simplicity, set $a = d_1(s,\tilde{u}_1)$ and $b = d_1(s, \tilde{u}_2)$. It then follows that
\begin{equation}\label{eq:7}
d_1(\tilde u_1, \tilde u_2) = a + b - 2c_0\enspace. ~~\text{Note,}~~a \ge c_0, b\ge c_0\enspace. 
\end{equation}
Since $s$ is an endpoint of the longest path in $T_1$, it follows that $c_0 \ge \min\{a-c_0, b-c_0\}$ (if not, then without loss of generality, suppose the other point $s'$ of the diameter pair is not in the subtree of $T_1^s$ rooted at $\tilde{u_1}$; then $d_1(\tilde{u_1}, s') > d_1(s,s')$, a contradiction). By \eqref{eq:6}, $a,b \ge 6\delta +\nu$. Thus
\begin{align}\label{eq:8}
c_0 &\ge 6\delta + \nu - c_0 ~~\Rightarrow c_0 \ge \frac{1}{2}[6\delta + \nu]\enspace.
\end{align}
Combining \eqref{eq:7} and \eqref{eq:8}, we have
\begin{equation}
\label{eq:9}
d_1(\tilde{u}_1, \tilde{u}_2) \le a+b - 6\delta - \nu \le 20\delta + 2\nu -6\delta -\nu = 14 \delta + \nu,
\end{equation}
contradicting \eqref{eq:5}. Thus, there exists $z \in V(T_2)$ such that $d_2(t,z) \le 8\delta$.
\end{proof}

\fakeparagraph{Remark.} The proof of Claim~\ref{claim:c} actually shows that $t$ lies in the neighborhood of a leaf, as we never use the fact that $u_1$ and $u_2$ lie on the same edge of $T_2$. The only fact we use is that $u_1$ and $u_2$ lie on opposite sides of $t$ at distance $8\delta + \nu$ each.

From Lemmas~\ref{lem:gh_lb} and \ref{lem:gh_ub}, we get the following.

\begin{theorem}
\label{thm:gh_i}
Let $\Delta = \min_{u \in V(T_1), v \in V(T_2)} d_I(T^u_{1}, T^v_{2})$. Then 
\begin{align*}
\tfrac{1}{2} d_{GH}(\T_1, \T_2) \leq \Delta \leq 14 d_{GH}(\T_1, \T_2).
\end{align*}
\end{theorem}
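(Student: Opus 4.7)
The statement of Theorem~\ref{thm:gh_i} is an immediate consolidation of the two preceding lemmas, so the plan is essentially to observe that nothing new needs to be proved. First I would note that Lemma~\ref{lem:gh_lb} gives $\tfrac{1}{2} d_{GH}(\T_1,\T_2) \le d_I(T_1^s, T_2^t)$ for a specific pair of nodes $s \in V(T_1)$ and $t \in V(T_2)$ arising in its proof. Since this specific pair is a candidate in the minimum defining $\Delta$, we immediately get $\tfrac{1}{2} d_{GH}(\T_1,\T_2) \le \Delta$, which is the left-hand inequality.

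For the right-hand inequality, Lemma~\ref{lem:gh_ub} explicitly exhibits a pair $(s,z) \in V(T_1) \times V(T_2)$ (namely, $s$ an endpoint of a longest path in $T_1$ and $z$ a leaf of $T_2$ within $8\delta$ of the image $t$ of $s$ under an optimal correspondence) for which $d_I(T_1^s, T_2^z) \le 14 \delta = 14 d_{GH}(\T_1,\T_2)$. Because $\Delta$ is defined as the \emph{minimum} of $d_I(T_1^u, T_2^v)$ over all $u \in V(T_1), v \in V(T_2)$, it follows that $\Delta \le d_I(T_1^s, T_2^z) \le 14 d_{GH}(\T_1,\T_2)$.

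Chaining these two bounds yields the theorem. There is no real obstacle here; the only subtlety worth flagging is that the $\Delta$ of the theorem statement minimizes over \emph{all} pairs of vertices, whereas the lemmas each provide bounds for specific pairs --- but this asymmetry works in the right direction for both inequalities (a specific witness suffices to upper bound the minimum, and any pair realizing the minimum in particular witnesses the lower bound). This is the standard ``sandwich'' conclusion of a pair of matching lemmas.
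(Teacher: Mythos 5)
Your proposal is correct and takes exactly the same route as the paper: the theorem carries no separate proof in the text, only the sentence ``From Lemmas~\ref{lem:gh_lb} and \ref{lem:gh_ub}, we get the following,'' so the intended argument is precisely the concatenation of those two lemmas, both of which already state their inequalities directly in terms of $\Delta$. One small slip worth flagging: in your first paragraph you justify the left inequality by saying the pair $(s,t)$ is ``a candidate in the minimum defining $\Delta$,'' but a mere candidate only yields $\Delta \le d_I(T_1^s,T_2^t)$, which points the wrong way; what the proof of Lemma~\ref{lem:gh_lb} actually uses is that $(s,t)$ is a \emph{minimizer}, so that $d_I(T_1^s,T_2^t)=\Delta$. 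You do correct this in your closing paragraph, and since both lemma statements already bound $\Delta$ itself, the detour through specific pairs was unnecessary for the lower bound.
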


\begin{corollary}
\label{cor:gh_i}
If there is a polynomial time, $c$-approximation algorithm for the interleaving distance between two merge trees, then there is a polynomial time, $28c$-approximation algorithm for the Gromov-Hausdorff distance between two metric trees.
\end{corollary}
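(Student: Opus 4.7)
The plan is to apply Theorem~\ref{thm:gh_i} directly: since $\Delta$ is within a constant factor of $d_{GH}(\T_1,\T_2)$ from both sides, it suffices to approximate $\Delta$ well and then scale. Given a $c$-approximation algorithm $\mathcal{A}$ for the interleaving distance between merge trees, I would iterate over every pair $(u,v) \in V(T_1) \times V(T_2)$, construct the rooted merge trees $T_1^u$ and $T_2^v$ as defined in Section~\ref{sec:GH-interleaving}, run $\mathcal{A}$ on them to obtain an estimate $\hat{d}(u,v)$ with $d_I(T_1^u, T_2^v) \le \hat{d}(u,v) \le c\cdot d_I(T_1^u, T_2^v)$, and return $2\tilde{\Delta}$, where $\tilde{\Delta} = \min_{(u,v)} \hat{d}(u,v)$.

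For polynomial running time: there are $|V(T_1)|\cdot|V(T_2)| = O(n^2)$ pairs of vertices, each merge tree $T_i^u$ is computable from $\T_i$ in polynomial time, and $\mathcal{A}$ runs in polynomial time by hypothesis. Hence the overall algorithm is polynomial.

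For the approximation guarantee: let $(u^*, v^*)$ be a pair realizing the minimum in the definition of $\Delta$, so that $d_I(T_1^{u^*}, T_2^{v^*}) = \Delta$. Then on the one hand $\tilde{\Delta} \le \hat{d}(u^*, v^*) \le c\Delta$, and on the other hand for the pair $(\hat{u}, \hat{v})$ achieving the minimum in $\tilde{\Delta}$, $\tilde{\Delta} = \hat{d}(\hat{u}, \hat{v}) \ge d_I(T_1^{\hat{u}}, T_2^{\hat{v}}) \ge \Delta$. Thus $\Delta \le \tilde{\Delta} \le c\Delta$. Chaining with Theorem~\ref{thm:gh_i}, we obtain
\begin{align*}
\tfrac{1}{2} d_{GH}(\T_1, \T_2) \;\le\; \Delta \;\le\; \tilde{\Delta} \;\le\; c\Delta \;\le\; 14c\, d_{GH}(\T_1, \T_2),
\end{align*}
so $d_{GH}(\T_1, \T_2) \le 2\tilde{\Delta} \le 28c\, d_{GH}(\T_1, \T_2)$, giving the desired $28c$-approximation.

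There is no serious obstacle here; the work is already carried by Theorem~\ref{thm:gh_i}. The only mild subtlety is to note that we must search over \emph{all} pairs of tree vertices rather than guess the optimal pair, but this costs only a quadratic factor in $n$ and preserves polynomial running time.
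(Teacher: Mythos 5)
Your argument is correct and is exactly what the paper intends: Theorem~\ref{thm:gh_i} carries all the weight, and you just approximate $\Delta$ by running the $c$-approximation over candidate root pairs and rescaling by a factor of~$2$. One small efficiency note: the proof of Lemma~\ref{lem:gh_ub} shows that one may fix $s$ to be an endpoint of a diameter path of $T_1$ and only iterate over $v \in V(T_2)$, so $O(n)$ pairs suffice rather than $O(n^2)$ — this is what yields the $O(n^{7/2}\log n)$ bound in the paper's final corollary — but your $O(n^2)$-pair version is still polynomial and hence fine for the statement as written.
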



\section{Computing the Interleaving Distance}
\label{sec:compute_id}
Let $\mt_f$ and $\mt_g$ be merge trees of two functions $f$ and $g$, respectively. For simplicity, we use $f$ and $g$ to denote the height functions  on $\mt_f$ and $\mt_g$ as well. Let $n$ be the total number of nodes in $\mt_f$ and $\mt_g$, and let $r \ge 1$ be the ratio between the lengths of the longest and the shortest edges in $\mt_f$ and $\mt_g$. We describe a $O(\min\{n, \sqrt{rn}\})$-approximation algorithm for computing $d_I(\mt_f, \mt_g)$.

\fakeparagraph{Candidate values and binary search.} We first show that a candidate set $\Lambda$ of $O(n^2)$ values can be computed in $O(n^2)$ time such that $d_I(\mt_f,\mt_g) \in \Lambda$. Given $\Lambda$, we perform a binary search on $\Lambda$. At each step, we use a $c$-approximate decision procedure, for $c = c_1 \min\{n, \sqrt{rn}\}$ for some constant $c_1$, that given a value $\eps > 0$ does the following~: if $d_I(\mt_f, \mt_g) \leq \eps$, it returns a pair of $c\eps$-compatible maps between $\mt_f$ and $\mt_g$; if $d_I(\mt_f, \mt_g) > \eps$, it will either return a pair of $c\eps$-compatible maps between $\mt_f$ and $\mt_g$ or report that no such maps exist. The binary search terminates when one of the following two conditions meet :
\begin{itemize}
\item[(i)] We have two consecutive values $\eps^-, \eps^+ \in \Delta$ with $\eps^- < \eps^+$ such that the decision procedure returned \textsc{Yes} for $\eps^+$ and \textsc{No} for $\eps^-$; in this case we return $\eps^+$.
\item[(ii)] We have two (not necessarily consecutive) values $\eps^-, \eps^+ \in \Delta$ with $\eps^- < \eps^+$ such that the decision procedure returned \textsc{No} for $\eps^+$ and $\textsc{Yes}$ for $\eps^-$ (but with $\eps'$-compatible maps for some $\eps' > \eps^+$), in which case we return $\eps^-$.
\end{itemize}

It is clear that the procedure returns a value $\eps$ such that $d_I(\mt_f, \mt_g) \le \eps \le c d_I(\mt_f, \mt_g)$. We now describe the candidate set $\Lambda$. 

Let $V_f$ (resp. $V_g$) be the set of nodes in $\mt_f$ (resp. $\mt_g$). We define $\Lambda = \Lambda_{11} \cup \Lambda_{22} \cup \Lambda_{12}$, where
\begin{align*}
\Lambda_{11} &= \{\tfrac{1}{2} |f(u) - f(v)| \mid u,v \in V_f \},\\
\Lambda_{22} &= \{\tfrac{1}{2} |g(u) - g(v)| \mid u,v \in V_g\},\\
\Lambda_{12} &= \{|f(u) - g(v)| \mid u \in V_f, v \in V_g\}.
\end{align*}

\begin{lemma}
\label{lem:delta}
$d_I(\mt_f, \mt_g) \in \Lambda.$
\end{lemma}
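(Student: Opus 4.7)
The plan is to argue that the optimal $\eps^* := d_I(\mt_f, \mt_g)$ must correspond to some tight combinatorial constraint, and then show that every possible tight constraint pins $\eps^*$ to a value in $\Lambda$. First I would observe that the infimum in the definition of $d_I$ is attained: using the relaxed formulation from the Remark, a standard compactness argument lets us take a limit of $(\eps^* + 1/k)$-compatible pairs $(\alpha_k, \beta_k)$ to obtain an $\eps^*$-compatible pair $(\alpha, \beta)$.

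Next I would encode the \emph{combinatorial type} of such a pair by recording, for every node $u \in V_f$, which edge (or node) of $\mt_g$ contains $\alpha(u)$, and symmetrically which edge of $\mt_f$ contains $\beta(v)$ for every $v \in V_g$. Since $\alpha$ and $\beta$ are determined up to shifting by their images on the leaves, and since height preservation $g(\alpha(x)) = f(x) + \eps$ makes the position within an edge an affine function of $\eps$, the set of $\eps \ge 0$ that admits a compatible pair of a fixed combinatorial type forms a closed interval $[\eps_0, \eps_1]$ (possibly empty). In a neighborhood of any interior $\eps$ the type is preserved, so $\eps^*$ must be a left endpoint $\eps_0$ of the interval for some attained type, meaning that some constraint defining the interval becomes tight exactly at $\eps^*$.

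I would then enumerate the tight constraints and match each to a value in $\Lambda$. \textbf{(a)} If the image $\alpha(u)$ of some node $u \in V_f$ coincides with a node $v \in V_g$, then $g(v) = f(u) + \eps^*$, giving $\eps^* = |g(v) - f(u)| \in \Lambda_{12}$; the symmetric case with $\beta$ is analogous. \textbf{(b)} If the ancestor-preservation condition forces the images of two leaves of $\mt_f$ with common ancestor $p$ to first merge at a node $q \in V_g$ (i.e., the lowest common ancestor of their images is exactly at a branch point rather than in an edge interior), then $g(q) = f(p) + \eps^*$, again landing in $\Lambda_{12}$; the symmetric case yields another $\Lambda_{12}$ value. \textbf{(c)} If the composition constraint $\beta \circ \alpha = \shift_f^{2\eps}$ becomes tight at $\eps^*$, then some node $u \in V_f$ has $\beta(\alpha(u))$ lying at a node $v \in V_f$, forcing $f(v) = f(u) + 2\eps^*$ and hence $\eps^* = \tfrac{1}{2}|f(v) - f(u)| \in \Lambda_{11}$; symmetrically via $\alpha \circ \beta = \shift_g^{2\eps}$ we get values in $\Lambda_{22}$.

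The main obstacle I anticipate is the rigorous case analysis in step (c): the composition condition couples $\alpha$ and $\beta$ through a shift by $2\eps$, so a careful bookkeeping is needed to argue that any tight configuration involving the composition really does pin down a pair of nodes of the \emph{same} tree whose height difference equals $2\eps^*$. Handling this requires tracking how the orbit $u \mapsto \alpha(u) \mapsto \beta(\alpha(u))$ traverses edges of the two trees, and identifying the critical node of $V_f$ (or $V_g$) at which the shift $\shift_f^{2\eps}(u)$ first ceases to equal $\beta(\alpha(u))$ as $\eps$ decreases; every other constraint reduces straightforwardly to a height coincidence between nodes of $\mt_f$ and $\mt_g$, falling into $\Lambda_{12}$.
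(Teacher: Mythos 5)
Your proposal is essentially the contrapositive of the paper's argument: the paper assumes $\eps = d_I(\mt_f, \mt_g) \notin \Lambda$, uses the resulting positive slack at every node (exactly your tight-constraint enumeration, with your case (b) actually subsumed in (a) since two children of $p$ can only have their images merge at a node, and $\alpha(p)$ must then coincide with that node) to push $\alpha$ and $\beta$ down by a positive $\eps_0$, and verifies the resulting maps remain compatible, contradicting minimality of $\eps$. The step you flag as the main obstacle -- arguing that a failure of $\hat\beta \circ \hat\alpha = \shift_f^{2(\eps - \eps_0)}$ after the perturbation pins down a node of $V_f$ at the critical height $f(v) + 2\eps$ -- is precisely where the paper invests its careful two-case analysis, so your plan and the paper's proof are aligned both in outline and in where the real work lies.
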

\begin{proof}
Suppose to the contrary that $d_I(\mt_f, \mt_g) = \eps \notin \Lambda$. Let $\alpha : \mt_f \rightarrow \mt_g$ and $\beta : \mt_g \rightarrow \mt_f$ be $\eps$-compatible maps that realize $d_I(\mt_f, \mt_g) = \eps$. We will obtain a contradiction by choosing $\eps_0 > 0$ and constructing $(\eps - \eps_0)$-compatible maps $\hat{\alpha}, \hat{\beta}$.

For any point $x \in \mt_f$, we define $\alpha_{\downarrow}(x) = \alpha(x)$ if $\alpha(x)$ is a node of $\mt_g$, otherwise $\alpha_{\downarrow}(x)$ is the lower endpoint of the edge of $\mt_g$ containing $\alpha(x)$. Similarly we define the  function $\beta_{\downarrow} : \mt_g \rightarrow \mt_f$.

\begin{figure}[htb]
\centering
\includegraphics[width=0.45\textwidth]{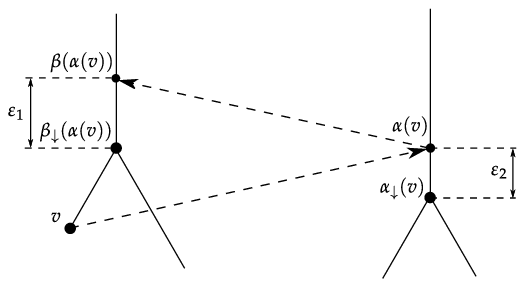}
\caption{\label{fig:delta_v}Trees $\mt_f$ and $\mt_g$. Here $\delta_v = \min\{\tfrac{1}{2}\eps_1, \eps_2\}$.}
\end{figure}

For every node $v \in V_f$, $\alpha(v)$ (resp. $\beta(\alpha(v))$) lies in the interior of an edge of $\mt_g$ (resp. $\mt_f$), because $\eps \notin \Lambda \supseteq \Lambda_{12}$ (resp. $\Lambda_{11}$). We define
$$ \delta_v = \min\{\tfrac{1}{2}\left(f(\beta(\alpha(v))) - f(\beta_{\downarrow}(\alpha(v))), g(\alpha(v)) - g(\alpha_{\downarrow}(v))\right)\}.$$
See Figure~\ref{fig:delta_v}. Similarly we define $\delta_w$ for all $w \in V_g$. We set
$$\eps_0 = \min\{\eps, \min_{v \in V_f \cup V_g}\delta_v\}.$$
Since $\eps \notin \Lambda$, we have $\eps_0 > 0$. We now construct $(\eps - \eps_0)$-compatible maps $\hat{\alpha} : \mt_f \rightarrow \mt_g$ and $\hat{\beta} : \mt_g \rightarrow \mt_f$. We describe the construction of $\hat{\alpha}$; $\hat{\beta}$ is constructed similarly. By construction, for any node $u \in V_f$, $g(\alpha(u)) - g(\alpha_{\downarrow}(u)) \ge \eps_0$, so we set $\hat{\alpha}(v)$ to be the point $w$ on the edge of $\mt_g$ containing $\alpha(u)$ such that $g(w) = f(u) + \eps - \eps_0$. Once we have defined $\hat{\alpha}(u)$ and $\hat{\alpha}(v)$ for an edge $uv \in \mt_f$, with $f(u) < f(v)$, we set $\hat{\alpha}(x)$, for a point $x \in uv$ with $f(x) = f(u) + \gamma$, to be
$$\hat{\alpha}(x) = \shift_g^{\gamma}(\hat{\alpha}(u)).$$
That is, we set $\hat{\alpha}(x)$ to be the ancestor of $\hat{\alpha}(u)$ at height $f(u) + \eps - \eps_0 + \gamma = f(x) + \eps - \eps_0$.

We claim that $\hat{\alpha}, \hat{\beta}$ are $(\eps - \eps_0)$-compatible. Indeed, by construction, $g(\hat{\alpha}(x)) = f(x) + \eps - \eps_0$ for all $x \in \mt_f$, and $f(\hat{\beta}(y)) = g(y) + \eps -\eps_0$ for all $y \in \mt_g$. We now prove that $$\hat{\beta} \circ \hat{\alpha} = \shift_f^{2(\eps-\eps_0)}.$$
\begin{figure}[htb]
\centering
\includegraphics[width=0.22\textwidth]{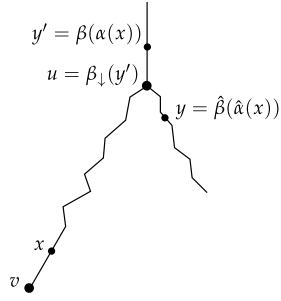}
\caption{\label{fig:candidates}Figure showing $u,v,x,y$ and $y'$.}
\end{figure}
Suppose to the contrary there is a point $x \in \mt_f$ such that $y = \hat{\beta}(\hat{\alpha}(x)) \ne \shift_f^{2(\eps-\eps_0)}(x)$. Since $f(y) = f(x) + 2(\eps-\eps_0)$, $y$ must not be an ancestor of $x$. On the other hand, $\alpha, \beta$ are $\eps$-compatible, so $y' = \beta(\alpha(x))$ is the ancestor of $x$ at height $f(x) + 2 \eps$. By construction of $\hat{\alpha}$ and $\hat{\beta}$, $y$ is a descendant of $y'$, in which case there is a node $u \in V_f$ that lies between $y$ and $y'$. (If $y$ and $y'$ lie on the same edge of $\mt_f$, then $y$ is also an ancestor of $x$.) Let $u = \beta_{\downarrow}(\alpha(x))$. Let $v$ be the lower endpoint of the edge $e$ containing $x$. See Figure~\ref{fig:candidates}. Since $\eps \notin \Delta$, $f(u) \ne f(v) + 2 \eps$ (i.e., $u \ne \beta(\alpha(v))$). There are two cases to consider :
\begin{itemize}
\item[(i)] $f(u) > f(v) + 2 \eps$. Then let $u = \beta(\alpha(z))$ for the point $z$ lying between $x$ and $v$ at height $f(z) = f(u) - 2\eps$. Furthermore $f(x) \ge f(z) > f(x) - 2\eps_0$ (if $f(x) - f(z) \ge 2\eps_0$, then $f(y') - f(u) = f(x) - f(z) \ge 2\eps_0$, contradicting the fact that $f(y') - f(y) = 2\eps_0$). Therefore we can choose a point $w \ne v$ on $e$ such that $f(z) > f(w) > f(x) - 2\eps_0$. Now, it's not too hard to see that if $x_1$ is an ancestor of $x_2$ in $\mt_f$, then $\hat{\alpha}(x_1)$ is an ancestor of $\hat{\alpha}(x_2)$ (similarly for $\hat{\beta}$). Further, $\hat{\beta}(\hat{\alpha}(x_1))$ is a descendant of $\beta(\alpha(x_1))$ for all $x_1 \in \mt_f$ (a similar result holds for $\hat{\alpha} \circ \hat{\beta}$ and $\alpha \circ \beta$). Thus, $\hat{\beta}(\hat{\alpha}(w))$ is a descendant of $y = \hat{\beta}(\hat{\alpha}(x))$ (since $w$ is a descendant of $x$). Moreover, $\beta(\alpha(w))$ is an ancestor of $\hat{\beta}(\hat{\alpha}(w))$. However, since $f(\beta(\alpha(w))) < f(u)$,  $\beta(\alpha(w))$ lies between $y$ and $u$. Thus, $\beta(\alpha(w))$ is not an ancestor of $x$ (hence $w$), i.e., $\beta(\alpha(w)) \ne \shift_f^{2\eps}(w)$, contradicting the fact that $\alpha, \beta$ are $\eps$-compatible.
\item[(ii)] $f(u) < f(v) + 2 \eps$. In this case
$$f(\beta(\alpha(v))) - f(\beta_{\downarrow}(\alpha(v))) \le f(\beta(\alpha(v))) - f(u) < f(y') - f(y) = 2\eps_0 \le 2\delta_v,$$
which contradicts the definition of $\delta_v$.
\end{itemize}
Hence we conclude that $y$ is an ancestor of $x$, i.e., $\hat{\beta} \circ \hat{\alpha} = \shift_f^{2(\eps-\eps_0)}$. Similarly, we argue that $\hat{\alpha} \circ \hat{\beta} = \shift_g^{2(\eps-\eps_0)}$, implying that $\hat{\alpha}, \hat{\beta}$ are $(\eps - \eps_0)$-compatible maps as claimed.

Putting everything together, we conclude that $\eps \in \Delta$.
\end{proof}

We now describe the decision procedure to answer the question ``is $d_I(\mt_f, \mt_g) \le \eps$?" approximately. 
We define the \emph{length} of any edge in a merge tree (other than the edge to infinity) to be the
height difference between its two endpoints. Given a parameter $\eps > 0$, an edge is called
$\eps$-\emph{long}, or \emph{long} for brevity, if its length is strictly greater than $2\eps$. We first describe an exact decision procedure for the case when all edges in both trees are long, and then describe an approximate decision procedure for the case when the two trees have short edges.

\fakeparagraph{Trees with long edges.}
\label{exact}
We remove all degree-two nodes in the beginning. A \emph{subtree} rooted at a point $x$ in a merge tree $\mt$, denoted $\mt^x$, includes all the points in the merge tree that are descendants of $x$ and an edge from $x$ that extends upwards to  height $\infty$. For a node $u \in V$, let $C(u)$ denote the children of $u$ and let $p(u)$ denote its parent.
Assume $d_I(\mt_f,\mt_g) \le \eps$, and let $\alpha : \mt_f \rightarrow \mt_g$ and $\beta : \mt_g \rightarrow \mt_f$ be a pair of $\eps$-compatible maps. As in the proof of Lemma~\ref{lem:delta}, we define the functions $\alpha_{\downarrow}$ and $\beta_{\downarrow}$ but restricted only to the vertices of $\mt_f$ and $\mt_g$. That is, for a node $v \in V_f$, we define $\alpha_{\downarrow}(v)$ to be the lower endpoint of the edge containing $\alpha(v)$ -- if $\alpha(v)$ is a node, then $\alpha_{\downarrow}(v)$ is $\alpha(v)$ itself. Similarly we define $\beta_{\downarrow}(w)$, for a node $w \in V_g$.

The following two properties of $\alpha_{\downarrow}$ and $\beta_{\downarrow}$ will be crucial for the decision procedure.

\begin{lemma}
\label{lem:A}
(i) For a node $v \in V_f$, $|f(v) - g(\alpha_{\downarrow}(v))| \le \eps$, and 
(ii) for a node $w \in V_g$, $|g(w) - f(\beta_{\downarrow}(w))| \le \eps$.
\end{lemma}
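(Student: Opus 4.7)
\noindent\emph{Plan.} Parts (i) and (ii) are symmetric under swapping $(\mt_f,\alpha)$ and $(\mt_g,\beta)$, so I only prove (i); (ii) follows identically. The inequality $g(\alpha_\downarrow(v)) - f(v) \le \eps$ is immediate, since $\alpha_\downarrow(v)$ is a descendant of $\alpha(v)$ in $\mt_g$ by definition, giving $g(\alpha_\downarrow(v)) \le g(\alpha(v)) = f(v) + \eps$. The substantive work is in the reverse direction, which I prove by contradiction.

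Set $w := \alpha_\downarrow(v)$ and assume $g(w) < f(v) - \eps$. Then $\alpha(v) \ne w$, so $\alpha(v)$ lies strictly in the interior of $[w, p(w)]$, an edge of length $>2\eps$ by hypothesis. The first step is to show that $\beta(w)$ is a \emph{proper} descendant of $v$ in $\mt_f$. Continuity together with the height identities $g\circ\alpha = f+\eps$ and $f\circ\beta = g+\eps$ forces both $\alpha$ and $\beta$ to be monotone along ancestor chains (a continuous height-preserving map between trees sends upward paths to upward paths). Since $w$ is a descendant of $\alpha(v)$ in $\mt_g$, $\beta(w)$ is a descendant of $\beta(\alpha(v)) = \shift_f^{2\eps}(v)$ in $\mt_f$. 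The long-edge hypothesis puts $\shift_f^{2\eps}(v)$ strictly in the interior of $[v,p(v)]$, and since $f(\beta(w)) = g(w) + \eps < f(v)$, the point $\beta(w)$ must be a proper descendant of $v$.

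If $v$ is a leaf this is an immediate contradiction. The main obstacle is the case where $v$ is a branching node with children $c_1,\dots,c_k$, $k\ge 2$. Here I analyze the image $\beta([w,p(w)])$: since $f\circ\beta = g+\eps$ is strictly monotone along $[w,p(w)]$ and $\beta$ is continuous, this image is a monotone upward path in $\mt_f$ from $\beta(w)$ to $\beta(p(w))$, passing through $\shift_f^{2\eps}(v)$ at the preimage $\alpha(v)$. Because $\beta(w)$ lies below $v$ while $\shift_f^{2\eps}(v)$ lies above, the path traverses $v$ itself, entering from within exactly one child subtree, say the one containing $c_1$. Consequently for every $y\in (w,p(w))$ with $g(y) < f(v) - \eps$, the image $\beta(y)$ is a descendant of $c_1$ in $\mt_f$ or lies on the edge $[c_1,v]$.

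The contradiction is produced by any sibling $c_2 \ne c_1$, which exists because $v$ has at least two children. The point $\alpha(c_2)$ is a descendant of $\alpha(v)$, hence lies either on the segment $(w,\alpha(v))$ or strictly below $w$ in $\mt_g$. In the first subcase, $\beta(\alpha(c_2)) = \shift_f^{2\eps}(c_2)$ lies on $[c_2,v]$, but the preceding paragraph would force it to be a descendant of $c_1$ or on $[c_1,v]$ --- impossible since $c_1\ne c_2$. In the second subcase, $w$ is an ancestor of $\alpha(c_2)$ in $\mt_g$, so monotonicity forces $\shift_f^{2\eps}(c_2) = \beta(\alpha(c_2))$ to be a descendant of $\beta(w)$, which itself sits in the subtree below $v$ through $c_1$ --- again incompatible with $\shift_f^{2\eps}(c_2)\in [c_2,v]$. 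The key subtlety is precisely this nonlocal coupling: the argument must weave together monotonicity, continuity of $\beta$ over the full $\mt_g$-edge, and the composition identity $\beta\circ\alpha = \shift_f^{2\eps}$, applied simultaneously at $v$ and at a sibling of the child subtree containing $\beta(w)$.
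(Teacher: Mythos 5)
Your proof is correct, and the core strategy coincides with the paper's: show that $\beta$ carries the portion of the $\mt_g$-edge $[w,p(w)]$ below height $f(v)-\eps$ into a single child subtree of $v$, then derive a contradiction from $\beta\circ\alpha=\shift_f^{2\eps}$ applied to a sibling subtree. The only real difference is the choice of witness. The paper fixes a point $v'$ on $(w,\alpha(v))$ at height $f(v)-\eps-\eps_0$ and a corresponding interior point $u$ of a sibling edge of $v$ at height $g(v')-\eps$; because there are no $\mt_g$-nodes between $\alpha(v)$ and $v'$, this forces $\alpha(u)=v'$, and $\beta(v')$ then lands in the wrong subtree. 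You instead take the child node $c_2$ itself as the witness and track where $\alpha(c_2)$ lands; since $\alpha(c_2)$ need not lie on $(w,\alpha(v))$ (it may fall at or below $w$), you need the extra subcase handled by monotonicity through $\beta(w)$, but in exchange you avoid the infinitesimal $\eps_0$ bookkeeping entirely. Both versions also implicitly dispose of the leaf case (you state it explicitly, the paper leaves it implicit). The two arguments buy the same thing with roughly the same effort, so this is best regarded as the same proof with a different choice of test point.
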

\begin{proof}
We will prove part (i); part (ii) is similar. By definition, $g(\alpha_{\downarrow}(v)) \le f(v) + \eps$. Suppose $g(\alpha_{\downarrow}(v)) < f(v) - \eps$. Let $v'$ be a point in $\mt_g$ lying on the edge containing $\alpha(v)$ and $\alpha_{\downarrow}(v)$ with height $f(v) - \eps - \eps_0$, for some sufficiently small $\eps_0$. Then $\beta(v')$ lies in one of the subtrees  rooted at the children of $v$, say $\mt_1$. Consider a descendant $u$ of $v$ at height $g(v') - \eps$ lying in a different subtree $\mt_2$ rooted at $v$'s child. Since by definition and our choice of $v'$ there does not exist any node in $\mt_g$ between $\alpha(v)$ and $v'$, we have $\alpha(u) = v'$. But then $\beta(\alpha(u)) = \beta(v')$ lies in $\mt_1$, and hence is not an ancestor of $u \in \mt_2$; in other words $\beta(\alpha(u)) \ne \shift_f^{2\eps}(u)$. This contradicts the fact that $\alpha$ and $\beta$ are $\eps$-compatible. Thus, $g(\alpha_{\downarrow}(v)) \ge f(v) - \eps$, and the claim follows.
\end{proof}

\begin{lemma}
\label{lem:B}
If all edges in $\mt_f$ and $\mt_g$ are $\eps$-long, then $\alpha_{\downarrow}$ and $\beta_{\downarrow}$ are bijections with $\beta_{\downarrow} = \alpha_{\downarrow}^{-1}$ (and $\alpha_{\downarrow} = \beta_{\downarrow}^{-1}$).
\end{lemma}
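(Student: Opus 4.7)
The plan is to prove both identities $\beta_\downarrow \circ \alpha_\downarrow = \mathrm{id}_{V_f}$ and $\alpha_\downarrow \circ \beta_\downarrow = \mathrm{id}_{V_g}$; together these immediately imply that $\alpha_\downarrow$ and $\beta_\downarrow$ are mutually inverse bijections between $V_f$ and $V_g$. By symmetry it suffices to prove the first identity, so fix a node $v \in V_f$ and set $w = \alpha_\downarrow(v) \in V_g$. The goal is to show $\beta_\downarrow(w) = v$.

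First I would locate $\beta(w)$ inside $\mt_f$. Two ingredients pin it down. On the height side, Lemma~\ref{lem:A}(i) yields $g(w) \ge f(v)-\eps$, and since $\beta$ shifts by exactly $\eps$ in height, $f(\beta(w)) = g(w)+\eps \in [f(v),\,f(v)+2\eps]$. On the ancestor side, $w$ is either $\alpha(v)$ itself or a descendant of $\alpha(v)$ (it is the lower endpoint of the edge of $\mt_g$ containing $\alpha(v)$), and the relaxed condition (ii) of the Remark in Section~\ref{sec:prelims} tells us that $\beta$ preserves ancestor/descendant relations. Hence $\beta(w)$ is a descendant (possibly equal) of $\beta(\alpha(v)) = \shift_f^{2\eps}(v)$, the ancestor of $v$ at height $f(v)+2\eps$.

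The crucial use of the long-edge hypothesis comes next. Because every edge of $\mt_f$ has length strictly greater than $2\eps$, the point $\shift_f^{2\eps}(v)$ lies strictly in the interior of the edge $(v, p(v))$: its height $f(v)+2\eps$ is strictly less than $f(p(v))$. Consequently, the set of descendants of $\shift_f^{2\eps}(v)$ whose height is at least $f(v)$ is exactly the closed segment on the edge $(v,p(v))$ between $v$ and $\shift_f^{2\eps}(v)$. The computation above places $\beta(w)$ in this segment, so the lower endpoint of the edge containing $\beta(w)$ is $v$ itself (and if $\beta(w)=v$, then $\beta_\downarrow(w)=v$ trivially). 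Either way, $\beta_\downarrow(w) = v$, which is what we wanted.

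The symmetric argument, swapping the roles of $(\mt_f,\alpha,\alpha_\downarrow)$ and $(\mt_g,\beta,\beta_\downarrow)$, establishes $\alpha_\downarrow(\beta_\downarrow(w)) = w$ for every $w \in V_g$. These two identities show that $\alpha_\downarrow$ is both injective (left inverse $\beta_\downarrow$) and surjective (right inverse $\beta_\downarrow$) onto $V_g$, finishing the proof. The only delicate step is the ``pinning down'' of $\beta(w)$ in the previous paragraph: one has to combine the height bound from Lemma~\ref{lem:A} with the ancestor preservation of $\beta$ and then exploit long edges to rule out the possibility that $\beta(w)$ lies in the subtree strictly below $v$ or strictly above $p(v)$. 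Once that geometric picture is set, the identification $\beta_\downarrow(w)=v$ is immediate from the definition of the $\downarrow$ operator.
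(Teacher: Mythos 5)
Your proposal is correct and follows essentially the same route as the paper's proof: both use Lemma~\ref{lem:A} for the height bound, both use the ancestor-preservation property of $\beta$ to place $\beta(\alpha_\downarrow(v))$ below $\beta(\alpha(v)) = \shift_f^{2\eps}(v)$, and both invoke the long-edge hypothesis to conclude that $\shift_f^{2\eps}(v)$ lies strictly interior to the edge above $v$, forcing $\beta_\downarrow(\alpha_\downarrow(v)) = v$. The only stylistic difference is that you argue directly while the paper frames it as a proof by contradiction (assuming $\beta_\downarrow(\alpha_\downarrow(v)) = w \ne v$ and deriving that $w$ would have to be a descendant of $v$, which the long-edge hypothesis rules out); your direct version makes the final identification of the edge containing $\beta(\alpha_\downarrow(v))$ slightly more explicit, which is a minor but welcome clarification.
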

\begin{proof}
We will first show that  $\beta_{\downarrow} = \alpha_{\downarrow}^{-1}$. Suppose to the contrary there exists a vertex $v \in V_f$ such that $\beta_{\downarrow}(\alpha_{\downarrow}(v)) = w \ne v$. Let $\alpha_{\downarrow}(v) = u$, for $u \in V_g$. From Lemma~\ref{lem:A} we have $|f(v) - f(w)| \le 2 \eps$. Since all edges are longer than $2\eps$ and $v \ne w$, $v$ cannot be an ancestor/descendant of $w$ in $\mt_f$. By definition of $\alpha_{\downarrow}$, $\alpha(v)$ is an ancestor of $\alpha_{\downarrow}(v) = u$. Thus $\beta(\alpha(v))$ is an ancestor of $\beta(u)$. Further, $|f(v) - g(u)| \le \eps$ (Lemma~\ref{lem:A}) and $\beta(\alpha(v)) = \shift_f^{2\eps}(v)$ (since $\alpha,\beta$ are $\eps$-compatible). 
Hence, $\beta(u)$ lies between $v$ and $\beta(\alpha(v)$ on the edge $e$ whose lower endpoint is $v$ as $e$ is $\eps$-long. 
Thus, $\beta(u)$ is an ancestor of $v$.
Also by definition of $\beta_{\downarrow}$, $\beta(u)$ is an ancestor of $\beta_{\downarrow}(u) = w$. Thus, $w$ is a descendant of $v$, a contradiction since $v$ cannot be an ancestor of $w$.

We thus have  $\beta_{\downarrow} = \alpha_{\downarrow}^{-1}$. Similarly, we can show that  $\alpha_{\downarrow} = \beta_{\downarrow}^{-1}$. This also implies that $\alpha_{\downarrow}$ and $\beta_{\downarrow}$ are bijections.
\end{proof}

We define an indicator function $\Phi : V_f \times V_g \rightarrow \{0,1\}$ such that 
\begin{align*}
\Phi(u,v) =
\begin{cases}
1, &\text{if } d_I(\mt_f^u, \mt_g^v) \le \eps,\\
0, &\text{otherwise.}
\end{cases}
\end{align*}
The following lemma gives a recursive definition of $\Phi(u,v)$.

\begin{lemma}
\label{lem:subproblem}
Suppose all the edges in $\mt_f$ and $\mt_g$ are $\eps$-long.
For a pair $(u,v) \in V_f \times V_g$, $\Phi(u,v) = 1$ if and only if the following conditions hold : (i) $|f(u) - g(v)| \leq \eps$, (ii)
$|C(u)| = |C(v)|$, and (iii) there exists a permutation~$\pi$ of $[1:|C(u)|]$ such that $\Phi(u_i,
v_{\pi(i)}) = 1$ for all $i \in [1:|C(u)|]$.
\end{lemma}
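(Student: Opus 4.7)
My plan is to prove both directions of the ``iff''. For the forward direction, I would assume $\Phi(u,v)=1$ with $\eps$-compatible maps $\alpha:\mt_f^u\to\mt_g^v$ and $\beta:\mt_g^v\to\mt_f^u$, and apply Lemma~\ref{lem:B} to the pair $(\mt_f^u,\mt_g^v)$, whose edges are all $\eps$-long (the finite ones inherit this from $\mt_f,\mt_g$, and the infinite edges above $u$ and $v$ are trivially long). This gives that $\alpha_\downarrow$ and $\beta_\downarrow$ are inverse bijections between $V(\mt_f^u)$ and $V(\mt_g^v)$. I would then establish a preliminary claim: $\alpha_\downarrow$ preserves the strict ancestor relation. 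If node $x_1$ is a strict ancestor of $x_2$, then $\alpha(x_1)$ is an ancestor of $\alpha(x_2)$; they cannot lie on the same edge of $\mt_g^v$, since otherwise $\alpha_\downarrow(x_1)=\alpha_\downarrow(x_2)$ would contradict injectivity, so $\alpha_\downarrow(x_1)$ must be a strict ancestor of $\alpha_\downarrow(x_2)$.

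Next, since the only portion of $\mt_f^u$ above $u$ is the (node-free) infinite edge, $u$ is the unique maximal element of $V(\mt_f^u)$, and likewise $v$ for $V(\mt_g^v)$. The order-preserving bijection $\alpha_\downarrow$ must send unique maxima to unique maxima, so $\alpha_\downarrow(u)=v$. Condition~(i) then follows from Lemma~\ref{lem:A}: $|f(u)-g(\alpha_\downarrow(u))|=|f(u)-g(v)|\le\eps$. The children of $u$ are precisely the maximal strict descendants of $u$ in $V(\mt_f^u)$, and likewise for $v$; the order-preserving bijection therefore carries $C(u)$ onto $C(v)$, yielding~(ii) and defining the permutation $\pi$ via $\alpha_\downarrow(u_i)=v_{\pi(i)}$. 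To get~(iii), I would restrict $\alpha,\beta$ to the subtrees: for a point $x\in\mt_f^{u_i}$ at or below $u_i$, set $\alpha_i(x)=\alpha(x)$ whenever $\alpha(x)$ lies at or below $v_{\pi(i)}$ in $\mt_g$; otherwise $\alpha(x)$ lies on the edge $(v_{\pi(i)},v)$ (by the height bounds forced by~(i) and the $\eps$-long property), and I would project it to the infinite edge of $\mt_g^{v_{\pi(i)}}$ at the same height. For $x$ on the infinite edge above $u_i$, send it to the matching height on the infinite edge above $v_{\pi(i)}$. A direct check verifies that $\alpha_i,\beta_i$ are $\eps$-compatible.

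For the backward direction, I would begin with witnesses $\alpha_i,\beta_i$ for each $\Phi(u_i,v_{\pi(i)})=1$ and glue them into $\eps$-compatible maps $\alpha,\beta$ between $\mt_f^u$ and $\mt_g^v$. On the descendants of $u_i$ in $\mt_f$, use $\alpha_i$ and re-embed its image in $\mt_g^v$: points at or below $v_{\pi(i)}$ map to the same point in $\mt_g^v$; points on the infinite edge of $\mt_g^{v_{\pi(i)}}$ at height $h$ map to the height-$h$ point on the edge $(v_{\pi(i)},v)$ if $h\le g(v)$, and to the infinite edge above $v$ otherwise. Extend $\alpha$ by the natural height-preserving map on the edge $(u,u_i)$ and on the infinite edge above $u$, and define $\beta$ symmetrically. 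Condition~(i) ensures that the gluing lines up at the seams $u_i\leftrightarrow v_{\pi(i)}$, and the $\eps$-long property of $(u,u_i)$ and $(v,v_{\pi(i)})$ guarantees that the $2\eps$-shift lands inside the intended edges, so $\beta\circ\alpha=\shift_f^{2\eps}$ and $\alpha\circ\beta=\shift_g^{2\eps}$ carry through from the $(\alpha_i,\beta_i)$ to the glued maps.

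The main obstacle is the forward direction's structural step: combining Lemma~\ref{lem:B} (bijectivity) with the preservation of strict ancestry to promote $\alpha_\downarrow$ to a rooted-tree isomorphism, from which $\pi$ is read off. The backward direction is essentially mechanical; the only subtlety is at the seams where the three regions (the subtree below $u_i$, the edge $(u,u_i)$, and the infinite edge above $u$) meet, and the $\eps$-long assumption is precisely what makes the composition conditions consistent there.
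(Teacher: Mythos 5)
Your proposal is correct and follows the same overall structure as the paper's proof: Lemma~\ref{lem:B} applied to the subtrees gives the bijection, which determines the permutation $\pi$ and proves the forward direction by restriction, while the backward direction glues the subtree maps. One genuine difference is in how condition~(i) is obtained. The paper argues~(i) directly by contradiction: if $f(u) > g(v)+\eps$, then some child edge of $u$ would have no interior point in the image of~$\beta$ even though the $2\eps$-shift of that child must land there. You instead derive~(i) as a byproduct of the structural step: promote $\alpha_\downarrow$ to a rooted-tree isomorphism (via strict-ancestry preservation), conclude $\alpha_\downarrow(u)=v$, and invoke Lemma~\ref{lem:A}. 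Your route is cleaner and makes explicit the ancestry-preservation property of~$\alpha_\downarrow$, which the paper uses implicitly when asserting that $\alpha_\downarrow$ sends children to children; you also spell out the re-embedding of subtree images into the full tree, which the paper compresses into a set-valued union notation. A small gap in your backward direction: the ``natural height-preserving map on the edge $(u,u_i)$'' is not well defined when $f(u_i)<g(v_{\pi(i)})$ (there is then no target point at height $f(u_i)$ on the edge $(v_{\pi(i)},v)$). This is repaired by either using the relaxed compatibility conditions from the preliminaries (allowing the image height to lie in $[f(x),f(x)+\eps]$) and clamping to $g(v_{\pi(i)})$, or, as the paper does, by using $\alpha_i$ directly on the identified infinite edge rather than inventing a fresh map there.
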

\begin{proof}
Suppose $\Phi(u,v) = 1$, and let $\alpha,\beta$ be the corresponding $\eps$-compatible maps.
To see why (i) holds, for contradiction, suppose property (i) does not hold, and let~$f(u) > g(v)$ without loss of
generality.
Thus, $\beta(v)$ maps to one of the multiple edges incident to $u$, and there exists at least one edge $e = (u,w)$ with $w \in C(u)$ such that none of $e$'s points (other than $u$) is in the image of $\beta$. However, $\beta(\alpha(u)) = \shift_f^{2\eps}(u)$ must lie in the interior of $e$ (since $e$ is $\eps$-long), a contradiction. To prove that (ii) holds, note that by Lemma~\ref{lem:B}, there exist bijections $\alpha_{\downarrow}, \beta_{\downarrow}$ between $V_f, V_g$ such that if $u_1 \in C(u_2)$ in $V_f$, then $\alpha_{\downarrow}(u_1) \in C(\alpha_{\downarrow}(u_2))$ (a symmetric statement holds for $\beta_{\downarrow}$ and $V_g$). Thus, $\alpha_{\downarrow}, \beta_{\downarrow}$ induce bijections between $C(u)$ and $C(v)$, and hence $|C(u)| = |C(v)|$. 
Finally, for (iii), Let $\alpha_{\downarrow}(u') = v'$ for some $u' \in C(u), v' \in C(v)$. Then by definition of $\alpha_{\downarrow}$ and $\beta_{\downarrow}$, $\alpha(\mt_f^{u'}) \subseteq \mt_g^{v'}$ and $\beta(\mt_g^{v'}) \subseteq \mt_f^{u'}$. 
This means that the restriction of the pair of $\eps$-compatible maps $\alpha$ and $\beta$ to $\mt_f^{u'}$ and $\mt_g^{v'}$ respectively remain $\eps$-compatible for $\mt_f^{u'}$ and $\mt_g^{v'}$. 
Thus, $\Phi(u',v') = 1$, and the permutation $\pi$ is defined by $\alpha_{\downarrow}, \beta_{\downarrow}$.

We now prove the opposite direction. Suppose properties (i),(ii) and (iii) hold. Let $(\alpha_i, \beta_i)$ be the pair of
$\eps$-compatible maps between $\mt_f^{u_i}$ and $\mt_g^{v_{\pi(i)}}$. Then, a pair of
$\eps$-compatible maps $(\alpha, \beta)$ between $\mt_f^u$ and $\mt_g^v$ is obtained as follows :
$\alpha(x) = \{\alpha^i(x) \mid x \in \mt_f^{u_i}\}$ ($\beta$ is defined similarly). Note that
points on the infinite edge from $u$ (resp. $v$) upwards are \emph{shared} among all $\mt_f^{u_i}$ (resp.
$\mt_g^{v_j}$), whereas all other points in $\mt_f^u$ (resp. $\mt_g^v$) are present in only one
$\mt_f^{u_i}$ (resp. $\mt_g^{v_j}$). However, since $|f(u)-g(v)| \leq \eps$, \emph{shared} points
are mapped to \emph{shared} points and we have $|\alpha(x)| = 1$ (resp. $|\beta(y)| = 1|$) for all
$x \in \mt_f^u$ (resp. $y \in \mt_g^v$). Thus, $\alpha$ and $\beta$ are functions and satisfy all the
required properties. Hence, $\Phi(u,v) = 1$.
\end{proof}

$\mathbf{\mathit{Decision}}$ $\mathbf{\mathit{procedure.}}$ We compute $\Phi$ for all pairs of nodes in $V_f \times V_g$ in a bottom-up manner and return $\Phi(r_f, r_g)$ where $r_f$ (resp. $r_g$) is the root of $\mt_f$ (resp. $\mt_g$). Let $(u,v) \in V_f \times V_g$.

Suppose we have computed $\Phi(u_i,v_j)$ for all $u_i \in C(u)$ and $v_j \in C(v)$. We compute $\Phi(u,v)$ as follows. If (i) or (ii) of Lemma~\ref{lem:subproblem} does not hold for $u$ and $v$, then we return $\Phi(u,v) = 0$. Otherwise we construct the bipartite graph $G_{uv} = \{C(u) \cup C(v), E = \{(u_i,v_j) \mid \Phi(u_i,v_j) = 1\}\}$ and determine in $O(k^{5/2})$ time whether $G_{uv}$ has a perfect matching, using the algorithm by Hopcroft and Karp \cite{bipartite}. Here, $k = |C(u)| = |C(v)|$. If $G_{uv}$ has a perfect matching $M = \{(u_1, v_{\pi(1)}), \ldots, (u_k, v_{\pi(k)})\}$, we set $\Phi(u,v) = 1$, else we set $\Phi(u,v) = 0$. If $\Phi(u,v) = 1$, we use the $\eps$-compatible maps for $\mt_f^{u_i}, \mt_g^{v_{\pi(i)}}$, for $1 \leq i \leq k$, to compute a pair of $\eps$-compatible maps between $\mt_f^u$ and $\mt_g^v$, as discussed in the proof of Lemma~\ref{lem:subproblem}.

For a node $u \in V_f \cup V_g$, let $k_u$ be the number of its children.
The total time taken for running Hopcroft and Karp~\cite{bipartite} is :
\begin{align*}
  \sum_{u \in V(T_1)} \sum_{v \in V(T_2)} O\left(k_u k_v \sqrt{k_v}\right) = \sum_{u \in V(T_1)} k_u \sum_{v \in
  V(T_2)} O\left(k_v \sqrt{k_v}\right)
\leq O\left(n^{3/2}\right) \sum_{u \in V(T_1)} k_u
\leq O\left(n^{5/2}\right).
\end{align*}
Hence we obtain the following.
\begin{lemma}
\label{lem:easy_case}
Given two merge trees $\mt_f$ and $\mt_g$ and a parameter $\eps > 0$ such that all edges of $\mt_f$ and $\mt_g$ are $\eps$-long, then whether $d_I(\mt_f,\mt_g) \leq \eps$ can be determined in $O(n^{5/2})$ time. If the answer is yes, a pair of $\eps$-compatible maps between $\mt_f$ and $\mt_g$ can be computed within the same time.
\end{lemma}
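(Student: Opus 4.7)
The plan is to verify that the bottom-up decision procedure described immediately above the lemma statement correctly computes $\Phi(r_f, r_g)$, where $r_f, r_g$ are the roots of $\mt_f$ and $\mt_g$, and then to bound its running time. Correctness will follow almost directly from the recursive characterization given by Lemma \ref{lem:subproblem}, while the running time bound comes from a double-sum calculation using the Hopcroft--Karp algorithm at each pair of nodes.

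First I would argue correctness by induction on the combined height (in the tree) of the pair $(u,v) \in V_f \times V_g$. In the base case, $u$ and $v$ are both leaves, so $C(u) = C(v) = \emptyset$; conditions (ii) and (iii) of Lemma \ref{lem:subproblem} trivially hold and the procedure correctly sets $\Phi(u,v) = 1$ exactly when condition (i), namely $|f(u)-g(v)|\le \eps$, holds. For the inductive step, assume $\Phi(u_i, v_j)$ has been computed correctly for every $u_i \in C(u)$ and $v_j \in C(v)$. By Lemma \ref{lem:subproblem}, $\Phi(u,v) = 1$ iff (i) and (ii) hold and the bipartite graph $G_{uv}$ with edge set $\{(u_i,v_j)\mid \Phi(u_i,v_j)=1\}$ admits a perfect matching between $C(u)$ and $C(v)$; the procedure tests exactly these conditions, hence returns the right value. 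Finally, applying the procedure to $(r_f,r_g)$ decides whether $d_I(\mt_f,\mt_g)\le\eps$.

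Next I would describe the construction of the $\eps$-compatible maps in the positive case. Whenever the procedure sets $\Phi(u,v)=1$ via a perfect matching $M=\{(u_i,v_{\pi(i)})\}$, I take the pair of $\eps$-compatible maps $(\alpha_i,\beta_i)$ between $\mt_f^{u_i}$ and $\mt_g^{v_{\pi(i)}}$ recursively stored with each matched pair, glue them along the infinite edges above $u$ and $v$ as described in the proof of Lemma \ref{lem:subproblem} (this is well-defined because $|f(u)-g(v)|\le\eps$ ensures the infinite edges are mapped consistently), and store the result at $(u,v)$. The procedure returns the maps stored at $(r_f,r_g)$, which by induction are $\eps$-compatible for all of $\mt_f$ and $\mt_g$. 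Since specifying the maps on the leaves determines them everywhere (as noted in the remark of Section~\ref{sec:prelims}), the stored representation has size linear in $n$.

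For the running time, at a pair $(u,v)$ with $k_u = |C(u)|$ and $k_v = |C(v)|$, computing conditions (i) and (ii) takes $O(1)$ time, building $G_{uv}$ takes $O(k_u k_v)$ time, and Hopcroft--Karp \cite{bipartite} runs in $O((k_u+k_v)\sqrt{k_u+k_v}\cdot)$ edges, i.e.\ $O(k_u k_v \sqrt{k_u+k_v})$ time. Summing over all pairs $(u,v)\in V_f\times V_g$ and using $\sqrt{k_u+k_v}=O(\sqrt{n})$ together with $\sum_{u\in V_f} k_u \le n$ and $\sum_{v\in V_g}k_v\le n$ gives the bound
\begin{align*}
\sum_{u\in V_f}\sum_{v\in V_g} O\bigl(k_u k_v \sqrt{n}\bigr) = O(\sqrt{n})\Bigl(\sum_{u\in V_f} k_u\Bigr)\Bigl(\sum_{v\in V_g} k_v\Bigr) \cdot \tfrac{1}{n}\cdot n = O(n^{5/2}),
\end{align*}
matching the calculation in the excerpt. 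Gluing the maps at each pair adds only $O(k_u+k_v)$ bookkeeping per pair and does not dominate. The one potential obstacle I see is making sure that the recursive gluing does not double-count points on the shared infinite edges; this is handled precisely because condition (i) forces the infinite edges of $\mt_f^u$ and $\mt_g^v$ to be compatible, as already explained in the proof of Lemma \ref{lem:subproblem}, so no separate argument is needed.
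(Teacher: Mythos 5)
Your proposal is correct and takes essentially the same approach as the paper: compute $\Phi$ bottom-up via Lemma~\ref{lem:subproblem}, test at each $(u,v)$ with a bipartite perfect-matching query via Hopcroft--Karp, glue the child maps along the shared infinite edges, and bound the total cost by $\sum_u\sum_v k_u k_v\sqrt{k_u+k_v}=O(n^{5/2})$; you merely make explicit the induction on subtree height that the paper leaves implicit.
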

\begin{figure}[htb]
 \centering
  \includegraphics[height=0.2\textheight]{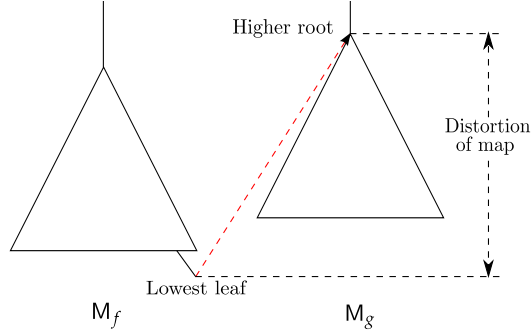}
\caption{\label{naive_map}A naive map.}
\end{figure}

\fakeparagraph{Trees with short edges.}
Given two merge trees, a naive map is to map the lowest among all the leaves in both the trees to
a point at height equal to the height of the higher root (see Figure~\ref{naive_map}). Thus, all the points in one tree will be mapped to the infinitely long edge on the other tree.
This map produces a distortion equal to the height of the trees, which can be arbitrarily larger than the optimum. Nevertheless, this simple idea leads to an approximation algorithm.

Here is an outline of the algorithm. After carefully \emph{trimming} off short subtrees from the input
trees, the algorithm decomposes the resulting trimmed trees into two kinds of regions -- those with nodes and those without nodes. If the interleaving distance between the input trees is small, then there exists an isomorphism between trees induced by the regions without nodes.
Using this isomorphism, the points in the nodeless regions are mapped without incurring additional
distortion. 
Using a counting argument and the naive map described above, it is shown that the distortion
incurred while mapping the regions with nodes and the trimmed regions is bounded.

\begin{figure}[htb]
\centering
\includegraphics[width=0.4\textwidth]{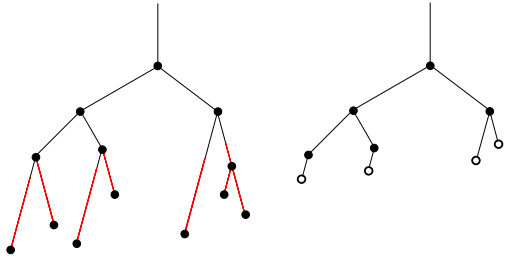}
\caption{\label{fig:extent}Trimming a tree : (left) original tree, red points have extent $< 2(\sqrt{2ns} + 1)\eps$; (right) trimmed tree, nodes added at the bottom (hollowed nodes). }
\end{figure}

More precisely, given $\mt_f$, $\mt_g$ and $\eps > 0$, define the \emph{extent $e(x)$} of a \emph{point}~$x$ (which is not necessarily a tree node) in $\mt_f$ or $\mt_g$ as the maximum height difference between $x$ and any of its descendants. Suppose each edge is at most $s\eps$ long. Let $\mt'_f$ and $\mt_g'$ be subsets of $\mt_f$ and $\mt_g$ consisting only of points with extent at least $2(\sqrt{2ns}+1)\eps$, adding nodes to the new leaves of $\mt'_f$ and $\mt'_g$ as necessary. 
Note that $\mt_f'$ and $\mt_g'$ themselves are trees, however they might contain nodes of degree 2. See Figure~\ref{fig:extent} for an example.
\begin{lemma}
\label{lem_extent}
If $d_I(\mt_f,\mt_g) \leq \eps$, then $d_I(\mt_f', \mt_g') \leq \eps$.
\end{lemma}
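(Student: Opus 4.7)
The plan is to take any $\eps$-compatible pair $(\alpha,\beta)$ for $(\mt_f,\mt_g)$ and show that the restrictions $\alpha' := \alpha|_{\mt_f'}$ and $\beta' := \beta|_{\mt_g'}$ constitute an $\eps$-compatible pair for $(\mt_f',\mt_g')$. The whole argument hinges on a single structural fact, which I would establish first: $\alpha$ and $\beta$ cannot decrease extent. Concretely, for every $x \in \mt_f$, $e(\alpha(x)) \geq e(x)$, and symmetrically for $\beta$. To prove this I would use that $\alpha$ maps ancestors to ancestors (a consequence of continuity combined with the exact height condition $g(\alpha(\cdot)) = f(\cdot)+\eps$ in \eqref{eq:inter_cond}: a height-monotone continuous image of an ancestor path in a merge tree must itself be an ancestor path). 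Then for any descendant $y$ of $x$, $\alpha(y)$ is a descendant of $\alpha(x)$ and $g(\alpha(x)) - g(\alpha(y)) = f(x) - f(y)$, so taking the infimum over $y$ gives the claim.

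Second, I would apply this monotonicity with the threshold $\tau := 2(\sqrt{2ns}+1)\eps$. Since $\mt_f'$ consists of exactly the points of $\mt_f$ whose extent is at least $\tau$ (together with newly added nodes at the boundary), extent-monotonicity immediately yields $\alpha(\mt_f') \subseteq \mt_g'$ and $\beta(\mt_g') \subseteq \mt_f'$. The added boundary leaves present no additional complication: they still have extent $\geq \tau$ in $\mt_f$ (resp.\ $\mt_g$), so the same inclusion applies to them.

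Third, I would check the four defining conditions \eqref{eq:inter_cond} of $\eps$-compatibility for $(\alpha',\beta')$. The two height-shift conditions $g(\alpha'(x)) = f(x) + \eps$ and $f(\beta'(y)) = g(y) + \eps$ are inherited verbatim. For the two composition conditions, the useful observation is that both $\mt_f'$ and $\mt_g'$ are upward-closed in their parent trees: if $x'$ is an ancestor of $x \in \mt_f'$, then $e(x') \geq e(x) + f(x') - f(x) \geq \tau$, so $x' \in \mt_f'$. Hence $\shift_f^{2\eps}|_{\mt_f'}$ coincides with the intrinsic shift $\shift_{f'}^{2\eps}$ in $\mt_f'$, and $\beta' \circ \alpha' = (\beta \circ \alpha)|_{\mt_f'} = \shift_f^{2\eps}|_{\mt_f'} = \shift_{f'}^{2\eps}$, with the symmetric argument yielding $\alpha' \circ \beta' = \shift_{g'}^{2\eps}$.

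The step I expect to be slightly delicate is the ancestor-preservation property of $\alpha$ and $\beta$, which is used implicitly both in the extent-monotonicity lemma and in identifying the two shifts. Once that is pinned down, everything else is a clean restriction argument and the lemma follows. The conclusion is that $(\alpha',\beta')$ is an $\eps$-compatible pair between $\mt_f'$ and $\mt_g'$, so $d_I(\mt_f',\mt_g') \leq \eps$.
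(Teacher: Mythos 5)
Your proposal is correct and takes essentially the same route as the paper's proof: restrict the $\eps$-compatible maps $\alpha,\beta$ to $\mt_f',\mt_g'$ and show the restrictions land in the trimmed trees, relying on the ancestor-preservation property of $\eps$-compatible maps. The only difference is one of packaging: you isolate ``extent monotonicity'' ($e(\alpha(x)) \geq e(x)$) as an explicit lemma and then check all four compatibility conditions for the restrictions, whereas the paper proves the same fact inline by picking a descendant $x'$ of $x$ at depth exactly $2(\sqrt{2ns}+1)\eps$ and deriving a contradiction when $\alpha(x)\notin\mt_g'$, leaving the verification that the restrictions remain $\eps$-compatible implicit.
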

\begin{proof}
Let~$\alpha : \mt_f \to \mt_g$ and $\beta: \mt_g \to \mt_f$ be $\eps$-compatible maps. Let~$\alpha'$ and~$\beta'$ be restrictions of the functions' domains to $\mt_f'$ and $\mt_g'$ respectively. We argue that the ranges of~$\alpha'$ and~$\beta'$ lie in $\mt_g'$ and $\mt_f'$ respectively. Suppose otherwise. Then without loss of generality, there is a point~$x \in \mt_f'$ with~$y = \alpha(x)$ not in $\mt_g'$. Because $x \in \mt_f'$, its extent in~$\mt_f$ is at least $2(\sqrt{2ns}+1)\eps$. Therefore, there exists a descendant~$x'$ of~$x$ in~$\mt_f$ with $f(x') = f(x) - 2(\sqrt{2ns}+1)\eps$. Because~$y$ is not in~$\mt_g'$, the extent of $y$ must be less than $2(\sqrt{2ns}+1)\eps$ and there exists no descendant $y'$ of $y$ with $g(y') = g(y) - 2(\sqrt{2ns}+1)\eps = f(x) - 2(\sqrt{2ns}+1)\eps + \eps = f(x') + \eps$. Since $g(\alpha(x')) = f(x') + \eps$, $\alpha(x')$ is not a descendant of $\alpha(x)$, which contradicts the assumption that $\alpha, \beta$ are $\eps$-compatible maps.
\end{proof}
The above lemma can be easily generalized to say that removing points in both trees with extent less than or equal to any fixed value does not change the distance between them.

We now define \emph{matching points} in $\mt_f'$ and $\mt_g'$. A \emph{branching node} is a node of degree greater than 2. A point $x$ in $\mt_f'$ is a matching point if there exists a branching node or a leaf $x'$ in $\mt_f'$ or $y'$ in $\mt_g'$ with function value $f(x)$ and there exist no branching nodes nor leaves in $\mt_f'$ or $\mt_g'$ with function value in the range $(f(x), f(x) + 2\eps]$. Matching points on $\mt_g'$ are defined similarly. By this definition, no two matching points share a function value within $2\eps$ of each other unless they share the exact same function value. Furthermore, if $x$ is a matching point, then all points with the same function value as $x$ on both $\mt_f'$ and $\mt_g'$ are matching points.
There are~$O(n^2)$ matching points. 

Suppose $d_I(\mt_f', \mt_g') \leq \eps$, and let $\alpha' : \mt_f' \to \mt_g'$ and $\beta': \mt_g' \to \mt_f'$ be a pair of $\eps$-compatible maps. 
Call a matching point $x$ in $\mt_f'$ and a matching point $y$ in $\mt_g'$ with $f(x) = g(y)$ \emph{matched} if $\alpha'(x)$ is an ancestor of $y$.
\begin{lemma}
\label{lem:matching}
Let $x$ be any matching point in $\mt_f'$. The matched relation between matching points in $\mt_f'$ at height $f(x)$ and matching points in $\mt_g'$ at height $f(x)$ is a bijective function.
\end{lemma}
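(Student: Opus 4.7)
The plan is to exhibit a canonical candidate $\phi\colon M_f(h)\to M_g(h)$, where $M_f(h)$ and $M_g(h)$ denote the matching points at height $h:=f(x)$ in $\mt_f'$ and $\mt_g'$ respectively, and to verify that $\phi$ coincides with the matched relation and is a bijection. The key leverage throughout is the matching-point hypothesis: no branching node and no leaf of either $\mt_f'$ or $\mt_g'$ lies at a height in the slab $(h, h+2\eps]$. Consequently, inside this slab each tree is combinatorially a disjoint union of simple arcs, and any downward traversal of length at most $2\eps$ proceeds along a unique non-branching path that cannot terminate at a leaf.

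First I would define $\phi(x)$ to be the unique descendant of $\alpha'(x)$ at height $h$ in $\mt_g'$; this exists and is unique by the slab observation applied to $\alpha'(x)$, which sits at height $h+\eps$, and by definition $\phi(x)$ is matched to $x$ and is itself a matching point since matching is determined by height alone. Next I would prove injectivity: if $\phi(x_1)=\phi(x_2)=y$, then $\alpha'(x_1)$ and $\alpha'(x_2)$ both equal the unique ancestor of $y$ at height $h+\eps$, so the identity $\beta'\circ\alpha' = \shift_f^{2\eps}$ yields $\shift_f^{2\eps}(x_1)=\shift_f^{2\eps}(x_2)$; but two distinct points of $M_f(h)$ have their least common ancestor at a branching node of $\mt_f'$ strictly above height $h$, and the slab condition pushes this branching strictly above height $h+2\eps$, so their ancestors at height $h+2\eps$ must be distinct, a contradiction. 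For surjectivity, given $y\in M_g(h)$, I would symmetrically descend from $\beta'(y)$ by $\eps$ to obtain a candidate $x\in M_f(h)$; since $\alpha'$ sends descendants to descendants and $\alpha'\circ\beta'=\shift_g^{2\eps}$, the no-branching condition in the slab of $\mt_g'$ forces $\alpha'(x)$, as a height-$(h+\eps)$ descendant of $\shift_g^{2\eps}(y)$, to lie on the unique descending path from $\shift_g^{2\eps}(y)$ to $y$, so $\alpha'(x)$ is an ancestor of $y$ and $\phi(x)=y$.

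The place that calls for the most care will be rigorously verifying that each $\eps$-descent inside the trimmed trees $\mt_f'$ and $\mt_g'$ stays within the tree and actually reaches height $h$, rather than hitting a possibly newly added leaf first. This is exactly where the matching-point definition is used in its full strength: it excludes not just branching nodes but also leaves of both $\mt_f'$ and $\mt_g'$ from the slab $(h, h+2\eps]$, guaranteeing that the relevant paths are simple edge segments of length $\eps$ lying entirely within the respective trimmed trees. Once that is laid down, the ancestor-monotonicity of $\alpha'$ and $\beta'$ together with the compositional identities $\beta'\circ\alpha'=\shift_f^{2\eps}$ and $\alpha'\circ\beta'=\shift_g^{2\eps}$ turn the bijection verification into a direct check.
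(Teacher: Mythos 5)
Your proof is correct and follows essentially the same route as the paper's: you establish functionality (well-definedness) via the unique descendant within the slab $(h,h+2\eps]$, injectivity via the compatibility identity $\beta'\circ\alpha'=\shift_f^{2\eps}$ together with the absence of branching nodes in the slab, and surjectivity by descending from $\beta'(y)$ and using ancestor-monotonicity plus $\alpha'\circ\beta'=\shift_g^{2\eps}$. The only cosmetic difference is that you package the matched relation as an explicit map $\phi$ up front, but the underlying arguments mirror the paper's step for step.
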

\begin{proof}
No two distinct matching points $y_1$ and $y_2$ on $\mt_g'$ with $f(x) = g(y_1) = g(y_2)$ share the same ancestor with function value $f(x) + \eps$, because they have no branching node ancestors with low enough function value. Therefore, a matching point in $\mt_f'$ can be matched to only one one matching point in $\mt_g'$. 

Let $x_1$ and $x_2$ be two distinct matching points from $\mt_f'$ with $f(x) = f(x_1) = f(x_2)$. If $\alpha'(x_1)$ and $\alpha'(x_2)$ are ancestors of a common matching point $y$, then $\alpha'(x_1) = \alpha'(x_2)$ and thus $x_1$ and $x_2$ must have a common ancestor $x'$ at height $f(x) + 2\eps$. However, $x_1$ and $x_2$ have no branching node ancestor with low enough function value for $x'$ to exist. Hence, the matching relation must be injective from matching points in $\mt'_f$ to $\mt'_g$.

Finally, consider any matching point $y$ on $\mt_g'$ with $g(y) = f(x)$. Point $x_1' = \beta'(y)$ is the ancestor of a matching point $x_1$ on $\mt_f'$. (Note that by the same argument as the beginning of this proof, only one such matching point $x_1$ can exist.) Point $y' = \alpha'(x_1')$ is an ancestor of $y$ with $g(y') \leq g(y) + 2\eps$. Point $y$ is the only descendant of $y'$ with function value $f(x)$. Point $\alpha'(x_1)$ must be an ancestor of $y$, meaning $x_1$ and $y$ are matched. Thus, the matching relation is surjective.
\end{proof}

\begin{figure}[htb]
\centering
\includegraphics[width=0.5\textwidth]{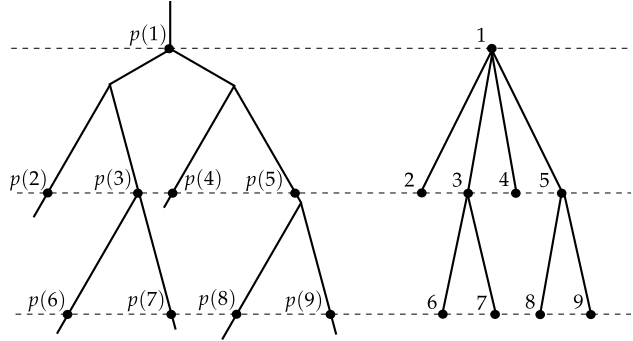}
\caption{\label{fig:matching}The left tree shows matching points on tree $\mt_f'$ and the right tree shows $\tilde{\mt}_f$. }
\end{figure}

We now define a rooted tree $\tilde{\mt}_f$ to be a rooted tree consisting of one node per matching point on $\mt_f'$. Let $p(v)$ be the matching point for node $v$. $\tilde{\mt}_f$ has node $v$ as an ancestor of node $u$ if $p(v)$ is an ancestor of $p(u)$ (see Figure~\ref{fig:matching}). Define $\tilde{mt}_g$ similarly. The size of $\tilde{\mt}_f$ and $\tilde{\mt}_g$ is $O(n^2)$. Intuitively, $\tilde{\mt}_f$ and $\tilde{\mt}_g$ represent the trees induced by matching points. By the definition of interleaving distance and Lemma~\ref{lem:matching}, $\tilde{\mt}_f$ and $\tilde{\mt}_g$ are isomorphic if $\mt'_f$ and $\mt'_g$ satisfy that $d_I (\mt'_f, \mt'_g) \le \eps$. 

$\mathit{Decision}$ $\mathit{procedure}$. We are now ready to describe the decision procedure. We first construct the subtrees $\mt'_f$ and $\mt'_g$ of $\mt_f$ and $\mt_g$, respectively, consisting of points with extent at least $2(\sqrt{2ns} + 1)\eps$. Next, we compute matching points on $\mt'_f$ and $\mt'_g$ and construct the trees $\tilde{\mt}_f$ and $\tilde{\mt}_g$ on these matching points, as defined above.

Using the algorithm of~\cite[chap. 3, p. 85]{algo}, we determine in time linear in the size of the trees whether $\tilde{\mt}_f$ and $\tilde{\mt}_g$ are isomorphic. If the answer is no, we return no. By Lemma~\ref{lem:matching}, $d_I(\mt_f, \mt_g) > \eps$ in this case. Otherwise we construct the following functions $\alpha: \mt_f \to \mt_g$ and $\beta: \mt_g \to \mt_f$ and return them.
For each pair of matching points $x$ and $y$ matched by the isomorphism, the algorithm sets $\alpha(x) = y$ and $\beta(y) = x$.
Now, let $(\xi_1, \xi_2)$ be any maximal range of function values without any branching nodes or leaves in $\mt_f'$ or $\mt_g'$ with $\xi_2 - \xi_1 > 2\eps$. Let $x'$ be any point in $\mt_f'$ with $f(x') \in (\xi_1, \xi_2)$. Point $x'$ has a unique matching point descendant $x$ at height $\xi_1$, by the definition of matching points. The algorithm sets $\alpha(x')$ to the point $y'$ in $\mt_g'$ where $y'$ is the ancestor of $\alpha(x)$ with $g(y') = f(x')$, and it sets $\beta(y') = x'$. For every remaining point $x''$ in $\mt_f'$, the algorithm sets $\alpha(x'')$ to $\alpha(x)$ where $x$ is the lowest matching point \emph{ancestor} of $x''$. $\beta(y'')$ is defined similarly for remaining points $y''$ in $\mt_g'$. We call such points $x''$ and $y''$ \emph{lazily assigned}. Finally, each point $z$ in $\mt_f - \mt_f'$ has $\alpha(z)$ set to $\alpha(x)$ where $x$ is the lowest ancestor of $z$ in $\mt_f'$. Similar assignments are done for points in $\mt_g - \mt_g'$.


\begin{lemma}
\label{lem:counting}
(i) For each lazily assigned point $x'' \in \mt_f'$, $$g(\alpha(x'')) \leq f(x'') +  2(\sqrt{2ns}+1) \eps.$$
(ii) For each lazily assigned point $y'' \in \mt'_g$, $$f(\beta(y'')) \leq g(y'') +  2(\sqrt{2ns}+1) \eps.$$
\end{lemma}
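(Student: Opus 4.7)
I prove part (i); part (ii) is completely analogous by swapping the roles of $\mt_f'$ and $\mt_g'$ throughout. First, I unpack $\alpha(x'')$: by the algorithm's lazy-assignment rule, $\alpha(x'') = \alpha(x)$, where $x$ is the lowest matching-point ancestor of $x''$ in $\mt_f'$, and the isomorphism between $\tilde{\mt}_f$ and $\tilde{\mt}_g$ matches $x$ to some matching point $y$ in $\mt_g'$ with $g(y) = f(x)$. Consequently $g(\alpha(x'')) = f(x)$, and the claim reduces to proving $f(x) - f(x'') \le 2(\sqrt{2ns}+1)\eps$.

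Next I describe the structure of branching/leaf heights between $f(x'')$ and $H := f(x)$. Since $x$ is the \emph{lowest} matching-point ancestor of $x''$, no matching height lies in $(f(x''), H)$; hence every branching/leaf height $c$ in this range strictly below $H$ must have another branching/leaf height in $(c, c+2\eps]$ (else $c$ itself would be a matching height). Consider the ``cluster'' of $H$: the maximal sequence $c_1 < c_2 < \cdots < c_m = H$ of consecutive branching/leaf heights with each gap at most $2\eps$. I claim $f(x'') \ge c_1$: if $f(x'')$ were strictly below $c_1$, then $x''$ would lie in some maximal gap $(c, c_1)$ between branching/leaf heights (with $c = -\infty$ if no lower such height exists); this gap cannot have length greater than $2\eps$, for then $x''$ would have been handled by the non-lazy rule, and it cannot have length at most $2\eps$ (with finite $c$) either, for then $c$ would extend the cluster below $c_1$, contradicting maximality. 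Writing $W := c_m - c_1$, we then have $f(x) - f(x'') \le W \le 2(m-1)\eps$.

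The main obstacle is that a naive bound gives only $m \le 2n$, yielding an $O(n\eps)$ estimate rather than $O(\sqrt{ns}\eps)$. My plan is a double-counting argument that uses both the edge-length bound $s\eps$ and the node count $\le 2n$. Each of the $m$ distinct cluster heights is occupied by at least one node of $\mt_f'\cup\mt_g'$, and each node is incident only to edges of length at most $s\eps$. By carefully accounting, for each of the at most $2n$ edges of $\mt_f'\cup\mt_g'$, how much ``strand-length'' it contributes above each event in the cluster window, I will obtain an inequality of the form $(m-1)\cdot W \le 2ns\eps$. Combining this with the width bound $W \le 2(m-1)\eps$, which gives $m-1 \ge W/(2\eps)$, yields $W^2/(2\eps) \le 2ns\eps$, hence $W \le 2\sqrt{ns}\,\eps \le 2(\sqrt{2ns}+1)\eps$, proving (i).

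For (ii), the lazy-assignment rule for $\beta$ is defined symmetrically: $\beta(y'') = \beta(y)$ where $y$ is the lowest matching-point ancestor of $y''$ in $\mt_g'$, matched by the isomorphism to some $x$ in $\mt_f'$ with $f(x) = g(y)$. Exchanging $\mt_f' \leftrightarrow \mt_g'$ in the cluster analysis and in the double-counting argument delivers $f(\beta(y'')) - g(y'') \le 2(\sqrt{2ns}+1)\eps$.
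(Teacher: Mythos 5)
Your setup is right: you correctly unpack the lazy-assignment rule, reduce to bounding $f(x)-f(x'')$ where $x$ is the lowest matching-point ancestor, identify the cluster $c_1<\cdots<c_m=f(x)$ of branching/leaf heights with consecutive gaps at most $2\eps$, and argue (a little loosely, but fixably) that $f(x'')\ge c_1$, so that $f(x)-f(x'')\le W:=c_m-c_1\le 2(m-1)\eps$. The reduction from there to a counting inequality is also algebraically correct.

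However, the heart of the lemma is precisely the counting step, and you do not carry it out: the passage ``By carefully accounting, for each of the at most $2n$ edges \dots how much `strand-length' it contributes \dots I will obtain an inequality of the form $(m-1)\cdot W\le 2ns\eps$'' is a plan, not a proof. Two concrete things are missing. First, the inequality itself is unjustified, and it is not obvious it holds in the generality you state it --- you never invoke the trimming invariant that every point of $\mt_f'\cup\mt_g'$ has extent at least $2(\sqrt{2ns}+1)\eps$, which is exactly what guarantees that a branching node at some $c_i$ has subtrees that extend far enough below to contribute the claimed strand-length. Second, the paper's charging argument sidesteps a chicken-and-egg issue by arguing by contradiction: it assumes the no-branch region lies below $f'=f(x)-2(\sqrt{2ns}+1)\eps$, then charges each branching/leaf in the window $[f',f(x))$ a number of nodes proportional to its depth above $f'$ (using the $s\eps$ edge-length bound), shows each node is charged at most twice, and sums to exceed $2n$. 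Your version charges over the full cluster window $[c_1,c_m]$, whose width $W$ is a priori unknown, so you cannot guarantee subtrees of high branchings span the whole window; you would need to restrict to a window of width $2(\sqrt{2ns}+1)\eps$ and argue by contradiction as the paper does. As written, the proof has a genuine gap at its central step.
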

\begin{proof}
We only prove (i); (ii) is symmetric. The higher of the two roots of $\mt_f'$ and $\mt_g'$ is a matching point, and so are all the points at that height. Thus, all lazily assigned points have a matching point ancestor. We show that the nearest such ancestor cannot be too higher up.

Let $x$ be a matching point. We show that there exists a region $(\xi_1, \xi_2)$ as defined above with 
$$f(x) - 2(\sqrt{2ns}+1) \eps \leq \xi_2 \leq f(x).$$
Consider sweeping over the function values downward starting at $f(x)$ and let $\xi_2$ be the largest function value possible for a region as defined above. If the sweep line ever goes a distance greater than $2\eps$ without encountering a branching node or leaf in $\mt_f'$ or $\mt_g'$, then an $\xi_2$ is found. Therefore, there will be at least one branching node or leaf $x'$ in $\mt_f'$ or $\mt_g'$ per descent of $2\eps$ until $\xi_2$ is found. Suppose $\xi_2 < f(x) - 2(\sqrt{2ns} + 1)\eps$. Let $l = \sqrt{2ns} + 1$, and $f' = f(x) - 2l\eps$. Because each point in $\mt_f'$ and $\mt_g'$ has extent at least $2l\eps$, each branching node or leaf $x' \in \mt_f' \cup \mt_g'$ with height $f(x') \in [f', f(x))$ introduces at least one descendant in $\mt_f$ or $\mt_g$ at height $f'$. Since each edge length is at most $s\eps$, we can uniquely charge at least $(f(x') - f')/s\eps$ nodes to each branching node $x'$ at height $f(x') \in [f', f(x))$. For each leaf node $x'$, we can still charge $(f(x') - f')/s\eps$ nodes to $x'$; however these nodes can also be charged to at most one more branching node, namely the lowest branching node ancestor of $x'$ with height in $[f', f(x))$. Thus, each node can be charged at most twice. Since there is at least one branching node or leaf per descent of $2\eps$, the total number of nodes charged is at least
$$ \sum_{i=1}^l \frac{(l-i)2\eps}{s\eps} = \frac{l(l-1)}{s}.$$
Since each node is charged at most twice, we have 
$$\frac{l(l-1)}{s} \le 2n \Rightarrow l(l-1) \le 2ns,$$
a contradiction since $l = \sqrt{2ns} + 1$. Therefore, either $\xi_2 \ge f(x) - 2(\sqrt{2ns} + 1)\eps$, or the trees $\mt'_f$ and $\mt'_g$ do not extend below height $f(x) - 2(\sqrt{2ns}+1)\eps$. In either case, the lemma follows.
\end{proof}

\begin{lemma}
\label{lem:approx}
Let $\mt_f$ and $\mt_g$ be two merge trees and let $\eps > 0$ be a parameter. There is an $O(n^2)$ time
algorithm that returns a pair of $4(\sqrt{2ns}+1)\eps$-compatible maps between $\mt_f$ and $\mt_g$, if
$d_I(\mt_f, \mt_g) \leq \eps$ and the maximum length of a tree edge is $s\eps$. If
$d_I(\mt_f, \mt_g) > \eps$, then the algorithm may return no or return a pair of
$4(\sqrt{2ns}+1)\eps$-compatible maps.
\end{lemma}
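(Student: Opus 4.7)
The plan is to analyze the algorithm already described right before the lemma and establish both (a) correctness, namely that the constructed $\alpha, \beta$ satisfy the relaxed $\eps'$-compatibility conditions from the Remark in Section~\ref{sec:prelims} with $\eps' = 4(\sqrt{2ns}+1)\eps$, and (b) an $O(n^2)$ running time. Throughout I will work with the relaxed definition of compatibility, which lets me only verify one-sided height bounds, preservation of the ancestor relation, and that $\beta\circ\alpha$ and $\alpha\circ\beta$ land on ancestors; the Remark then automatically upgrades to the ``exact'' form without changing $\eps'$.

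For the rejection branch, note that if the algorithm outputs no then $\tilde{\mt}_f$ and $\tilde{\mt}_g$ are non-isomorphic; by Lemma~\ref{lem_extent} any pair of $\eps$-compatible maps on $\mt_f, \mt_g$ restricts to a pair on $\mt_f', \mt_g'$, and then Lemma~\ref{lem:matching} would force $\tilde{\mt}_f \cong \tilde{\mt}_g$. So a rejection is consistent with $d_I(\mt_f,\mt_g) > \eps$.

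When the algorithm returns maps, I would bound the height displacement of $\alpha$ separately on four classes of points, and then argue symmetrically for $\beta$. For each matched pair $(x,y)$ of matching points, $g(\alpha(x)) = f(x)$ exactly by construction. For a point $x'$ lying in a maximal nodeless strip $(\xi_1,\xi_2)$ of $\mt_f'$ with $\xi_2 - \xi_1 > 2\eps$, the ancestor construction gives $g(\alpha(x')) = f(x')$ exactly as well, since a matching point descendant $x$ of $x'$ exists at height $\xi_1$ and $\alpha(x') = \shift_g^{f(x')-f(x)}(\alpha(x))$. For a lazily assigned point $x'' \in \mt_f'$, Lemma~\ref{lem:counting}(i) directly gives $g(\alpha(x'')) \leq f(x'') + 2(\sqrt{2ns}+1)\eps$. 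Finally, for $z \in \mt_f \setminus \mt_f'$ we set $\alpha(z) = \alpha(x)$ where $x$ is the lowest ancestor of $z$ in $\mt_f'$; since every point outside $\mt_f'$ has extent less than $2(\sqrt{2ns}+1)\eps$, we have $f(x) - f(z) < 2(\sqrt{2ns}+1)\eps$, and chaining with the lazy-assignment bound yields $g(\alpha(z)) \le f(z) + 4(\sqrt{2ns}+1)\eps$. Moreover $g(\alpha(\cdot)) \ge f(\cdot)$ throughout, because the algorithm only ever moves to ancestors of matched images. This establishes the one-sided height bound for $\alpha$ with $\eps' = 4(\sqrt{2ns}+1)\eps$, and the symmetric argument gives the same bound for $\beta$.

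Ancestor preservation is immediate at each stage of the construction: matched matching points respect the isomorphism of $\tilde{\mt}_f$ and $\tilde{\mt}_g$, the ``strip'' rule and lazy rule both lift via $\shift$ along the same ancestor chain, and the extension to $\mt_f \setminus \mt_f'$ is via a single ancestor lookup. Using this, $\beta(\alpha(x))$ lies on the ancestor chain from $x$ and is at height at most $f(x) + 2\eps'$; after the stretch-up allowed by the Remark it is precisely $\shift_f^{2\eps'}(x)$. The same argument handles $\alpha(\beta(y))$. So $(\alpha,\beta)$ is a pair of $4(\sqrt{2ns}+1)\eps$-compatible maps.

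For the running time, trimming $\mt_f$ and $\mt_g$ to $\mt_f', \mt_g'$ is $O(n)$ via a single bottom-up pass computing extents; there are $O(n^2)$ matching points, built in $O(n^2)$ time by scanning pairs of node heights; the trees $\tilde{\mt}_f, \tilde{\mt}_g$ have $O(n^2)$ nodes and are built in $O(n^2)$ time; rooted tree isomorphism on them takes time linear in their size via the algorithm of~\cite{algo}; and the final map assignment (matched points, strip-interior points, lazy assignment, trimmed extension) touches $O(n^2)$ anchors and is represented on the $O(n)$ leaves of $\mt_f$ and $\mt_g$ together with these anchors. The dominant cost is $O(n^2)$. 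The main obstacle is making sure the displacement bound really composes cleanly across the four point classes without losing more than a factor of two from the lazy bound of $2(\sqrt{2ns}+1)\eps$; everything else is structural bookkeeping.
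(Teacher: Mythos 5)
Your proof follows essentially the same approach as the paper: a four-way case analysis on point location (matched matching points, strip-interior points, lazily assigned points, and points trimmed off by the extent filter), with the displacement bound from Lemma~\ref{lem:counting} chained against the extent threshold to produce the factor $4(\sqrt{2ns}+1)\eps$. You are somewhat more explicit than the paper in verifying the ancestor-preservation conditions of the relaxed compatibility definition and in handling the rejection branch, but the decomposition, the key lemma invoked, and the running-time accounting are the same.
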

\begin{proof}
Constructing the trees $\tilde{\mt}_f$ and $\tilde{\mt}_g$, the corresponding isomorphism between them (if it exists), and the maps $\alpha$ and $\beta$ between $\mt_f$ and $\mt_g$ (if they exist) takes time $O(n^2)$.

Except for the lazily assigned points, all the points in $\mt'_f$ and $\mt'_g$ are mapped by $\alpha$ and $\beta$ resp. to points at the same function value. By Lemma~\ref{lem:counting}, each point in $\mt_f'$ and $\mt_g'$ has its function value changed by at most $2(\sqrt{2ns}+1)\eps$. Points in $\mt_f - \mt'_f$ (resp. $\mt_g - \mt'_g$) have their nearest ancestors in $\mt'_f$ (resp. $\mt'_g$) at function value at most $2(\sqrt{2ns} + 1)\eps$ away. Since $\alpha$ and $\beta$ map them to the images of their nearest ancestors, their function values change by at most $2\cdot 2(\sqrt{2ns} + 1)\eps$.
\end{proof}

\noindent\emph{Remark.}
(i) Since the minimum edge length is $\le 2 \eps$, the maximum edge length is $s \eps$, and the ratio between the lengths of the longest and shortest edges is $r$; we have $r \ge s/2$.\\
(ii) If $s = \Omega(n)$, we modify the above algorithm slightly -- we skip the trimming step, but keep the rest same. It can be shown, as in Lemma \ref{lem:counting}, that the height of a point and its image differ by at most $2n\eps$.

\fakeparagraph{Putting it together.} By Lemmas~\ref{lem:easy_case} and \ref{lem:approx}, the decision procedure takes $O(n^{5/2})$ time. If it returns no, then $d_I(\mt_f, \mt_g) > \eps$. If it returns yes, then it also returns $O(\min\{n, \sqrt{rn}\}\eps)$-compatible maps between them. Hence, we conclude the following.
\begin{theorem}
\label{theorem:approx}
Given two merge trees $\mt_f$ and $\mt_g$ with a total of $n$ vertices, there exists an $O(n^{5/2} \log n)$ time $O\left(\min\{n, \sqrt{rn}\}\right)$-approximation algorithm for computing the interleaving distance between them, where $r$ is the ratio between the lengths of the longest and the shortest edge in both trees.
\end{theorem}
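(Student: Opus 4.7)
The plan is to binary-search for $d_I(\mt_f,\mt_g)$ over the candidate set $\Lambda$ of Lemma~\ref{lem:delta}, invoking at each step a hybrid decision procedure that dispatches either to the exact algorithm of Lemma~\ref{lem:easy_case} or to the approximate algorithm of Lemma~\ref{lem:approx}, depending on whether the current test value $\eps$ makes all edges $\eps$-long. First I would compute $\Lambda$ and sort it in $O(n^2\log n)$ time, and then run an $O(\log n)$-iteration binary search over $\Lambda$ using the termination conditions~(i) and~(ii) described at the beginning of this section.

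For each test value $\eps$, the hybrid decision procedure first checks whether every edge of $\mt_f \cup \mt_g$ has length $> 2\eps$. If yes, Lemma~\ref{lem:easy_case} answers exactly in $O(n^{5/2})$ time and returns $\eps$-compatible maps on success. Otherwise some edge has length $\le 2\eps$, so the minimum edge length $\ell$ satisfies $\ell \le 2\eps$; setting $s = \lceil L/\eps\rceil$ with $L$ the maximum edge length, Lemma~\ref{lem:approx} then returns either no or a $4(\sqrt{2ns}+1)\eps$-compatible pair of maps in $O(n^2)$ time. Since $\ell \le 2\eps$, we get $s \le 2L/\ell = 2r$, so the approximation factor in the yes case is $O(\sqrt{rn})$. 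To also obtain the $O(n)$ bound (relevant when $\sqrt{rn} > n$), I would in parallel invoke the ``no-trimming'' variant noted in the remark after Lemma~\ref{lem:approx}, which yields $O(n)\cdot\eps$-compatible maps, and keep whichever returned distortion is smaller; this gives an $O(\min\{n,\sqrt{rn}\})$ factor overall.

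Correctness of the approximation ratio requires separately checking each termination case. In case~(i), $\eps^-$ and $\eps^+$ are consecutive elements of $\Lambda$, so Lemma~\ref{lem:delta} forces $d_I(\mt_f,\mt_g) \ge \eps^+$, and the yes at $\eps^+$ yields maps of distortion at most $O(\min\{n,\sqrt{rn}\})\cdot\eps^+ \le O(\min\{n,\sqrt{rn}\})\cdot d_I$. In case~(ii), the no at $\eps^+$ gives $d_I > \eps^+ > \eps^-$ while the yes at $\eps^-$ produces maps of distortion at most $c(\eps^-)\eps^- = O(\sqrt{rn})\cdot\eps^-$; dividing by $d_I > \eps^-$ leaves an $O(\sqrt{rn})$ factor. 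Summing $O(\log n)$ binary-search iterations of at most $O(n^{5/2})$ time each, plus $O(n^2\log n)$ for the preprocessing, yields the claimed $O(n^{5/2}\log n)$ running time.

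The main obstacle will be case~(ii). Because the approximation factor $c(\eps) = O(\sqrt{nL/\eps})$ in Lemma~\ref{lem:approx} decreases as $\eps$ grows, the decision procedure is genuinely non-monotone in $\eps$, so a yes at $\eps^-$ with a simultaneous no at a larger $\eps^+$ can actually occur. One must verify carefully that whenever the approximate procedure is used at $\eps^-$, we indeed have $\ell \le 2\eps^-$ (and hence $c(\eps^-) = O(\sqrt{rn})$), so that the maps returned at $\eps^-$ remain within the claimed $O(\min\{n,\sqrt{rn}\})$ multiple of $d_I(\mt_f,\mt_g)$.
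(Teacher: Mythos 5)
Your proposal follows essentially the same route as the paper: binary search over the sorted candidate set $\Lambda$, a hybrid decision procedure that dispatches to Lemma~\ref{lem:easy_case} when all edges are $\eps$-long and to Lemma~\ref{lem:approx} otherwise, the observation $s \le 2r$ from the remark, the no-trimming variant for the $O(n)$ bound, and the two termination cases handled exactly as the paper does. Your flagged ``obstacle'' in case~(ii) is in fact automatic rather than something to verify: the approximate branch is entered \emph{precisely because} some edge has length $\le 2\eps^-$, so $\ell \le 2\eps^-$ holds by the dispatch rule (and the non-monotonicity of the decision procedure comes from the approximate test's possible false positives when $d_I > \eps$, not from the dependence of $c$ on $\eps$).
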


Combining Theorem \ref{theorem:approx} with Corollary \ref{cor:gh_i}, we have: 
\begin{corollary}
 Given two metric trees $\T_1$ and $\T_2$ with a total of $n$ vertices, there exists an $O(n^{7/2} \log n)$ time $O\left(\min\{n, \sqrt{rn}\}\right)$-approximation algorithm for computing the Gromov-Hausdorff distance between them, where $r$ is the ratio between the lengths of the longest and the shortest edge in both trees.
\end{corollary}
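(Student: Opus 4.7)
The plan is to combine Theorem~\ref{theorem:approx} with the reduction encapsulated by Corollary~\ref{cor:gh_i}, but with one crucial runtime improvement: we do not iterate over all $O(n^2)$ possible pairs of root vertices $(u,v) \in V(T_1) \times V(T_2)$ needed to realize $\Delta = \min_{u,v} d_I(T_1^u, T_2^v)$. Instead, a careful reading of the proof of Lemma~\ref{lem:gh_ub} shows that the root on the $T_1$ side can be fixed in advance: there, $s$ is chosen as \emph{any} endpoint of a longest simple path in $T_1$, and the argument then exhibits a vertex $z \in V(T_2)$ such that $d_I(T_1^s, T_2^z) \le 14\, d_{GH}(\T_1,\T_2)$. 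Thus only $n$ interleaving-distance computations are needed rather than $n^2$.

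Concretely, I would proceed as follows. First, using two rounds of breadth-first search on $T_1$ in $O(n)$ time, identify a vertex $s \in V(T_1)$ which is an endpoint of a longest simple path (diameter) of $T_1$. Second, for every vertex $v \in V(T_2)$, invoke the algorithm of Theorem~\ref{theorem:approx} on the merge trees $T_1^s$ and $T_2^v$ to obtain a value $\tilde{d}_v$ with $d_I(T_1^s, T_2^v) \le \tilde{d}_v \le c \cdot d_I(T_1^s, T_2^v)$, where $c = O(\min\{n,\sqrt{rn}\})$; each such call takes $O(n^{5/2}\log n)$ time, giving $O(n^{7/2}\log n)$ total. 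Finally, output $2 \min_{v \in V(T_2)} \tilde{d}_v$ as the estimate of $d_{GH}(\T_1,\T_2)$.

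For correctness, let $\tilde{\Delta} = \min_v \tilde{d}_v$ and $\delta = d_{GH}(\T_1,\T_2)$. On one hand, Lemma~\ref{lem:gh_ub} (applied with our specific choice of $s$) produces some $v^* \in V(T_2)$ with $d_I(T_1^s, T_2^{v^*}) \le 14\delta$, so $\tilde{\Delta} \le \tilde{d}_{v^*} \le 14 c\, \delta$. On the other hand, Lemma~\ref{lem:gh_lb} gives $\delta/2 \le \Delta \le \min_v d_I(T_1^s, T_2^v) \le \tilde{\Delta}$. Combining, $\delta \le 2\tilde{\Delta} \le 28 c\, \delta$, yielding a $28c = O(\min\{n,\sqrt{rn}\})$-approximation. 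The total runtime is dominated by the $n$ invocations of the interleaving-distance approximation, namely $O(n^{7/2}\log n)$.

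The only real subtlety, and the step I would verify most carefully, is that fixing $s$ as an endpoint of the diameter of $T_1$ (rather than searching over $u$) is sound: this relies on the inner argument of Claim~\ref{claim:c}, which uses precisely the diameter property of $s$ to bound $c_0 \ge \frac{1}{2}(6\delta+\nu)$ and thereby contradict the assumption that no close vertex $z$ exists. No analogous structural property is needed for the root in $T_2$, so we must still sweep over all $v \in V(T_2)$. Everything else is bookkeeping of the constants already established in Theorem~\ref{thm:gh_i} and Theorem~\ref{theorem:approx}.
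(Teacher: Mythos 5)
Your proposal is correct, and it is the only reading of the paper's terse ``Combining Theorem~\ref{theorem:approx} with Corollary~\ref{cor:gh_i}'' that actually delivers the claimed $O(n^{7/2}\log n)$ running time. You correctly spotted the step the paper leaves implicit: a literal invocation of Corollary~\ref{cor:gh_i} via $\Delta = \min_{u,v}d_I(T_1^u,T_2^v)$ would cost $O(n^2)$ calls to the interleaving-distance algorithm, i.e.\ $O(n^{9/2}\log n)$ time, so one must fix the $T_1$-side root. Your justification for doing so is exactly right --- the proof of Lemma~\ref{lem:gh_ub} together with Claim~\ref{claim:c} establishes the upper bound $\min_{v\in V(T_2)}d_I(T_1^s,T_2^v)\le 14\,d_{GH}$ for $s$ a fixed diameter endpoint of $T_1$, while Lemma~\ref{lem:gh_lb} gives the matching lower bound $\tfrac{1}{2}d_{GH}\le \Delta \le \min_v d_I(T_1^s,T_2^v)$ since restricting the minimum can only increase it --- and your constant-chasing $\delta \le 2\tilde\Delta \le 28c\,\delta$ is correct. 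This is the same ``combine-and-sweep'' argument the authors intend, just made explicit where the paper is silent.
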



\section{Conclusion}
We have presented the first hardness results for computing the Gromov-Hausdorff distance between metric trees.
We have also given a polynomial time approximation algorithm for the problem. But the
current gap between the lower and upper bounds on the approximation factor is polynomially large. 
It would be very interesting to close this gap. 
In general, we hope that our current investigation will stimulate more research on the theoretical and algorithmic aspects of embedding or matching under additive metric distortion. 

\bibliographystyle{abbrv}
\bibliography{report_full}

\end{document}